\documentclass[times,12pt]{article}
\usepackage{latex8,times,amsmath,amssymb,amsthm}
\pagestyle{empty}

\newcommand{\namedref}[2]{\hyperref[#2]{#1~\ref*{#2}}}

\newcommand{\CC}{\mathcal{C}}

\newcommand{\F}{\mathbb{F}}

\theoremstyle{plain}

\newtheorem{theorem}{Theorem}
\newtheorem{lemma}[theorem]{Lemma}
\newtheorem{proposition}[theorem]{Proposition}

\newtheorem{definition}[theorem]{Definition}

\newtheorem{exmp}[theorem]{Example}
\newenvironment{example}{\begin{exmp}
\begin{normalfont}}{\end{normalfont}
\end{exmp}}


\theoremstyle{definition}
\theoremstyle{plain}

\newcommand{\eat}[1]{}

\renewcommand{\geq}{\geqslant}
\renewcommand{\leq}{\leqslant}

\renewcommand{\epsilon}{\varepsilon}

\def\bull{\vrule height .9ex width .8ex depth -.1ex }

\title{Explicit Maximally Recoverable Codes with Locality}
\author{Parikshit Gopalan \\ Microsoft Research \\ parik@microsoft.com
\and Cheng Huang \\ Microsoft Research \\ chengh@microsoft.com
\and Bob Jenkins \\ Microsoft Corporation \\ bob.jenkins@microsoft.com
\and Sergey Yekhanin \\ Microsoft Research \\ yekhanin@microsoft.com}

\begin{document}
\date{}
\maketitle

\begin{abstract}
Consider a systematic linear code where some (local) parity symbols depend on few prescribed symbols, while other (heavy) parity symbols may depend on all data symbols. Local parities allow to quickly recover any single symbol when it is erased, while heavy parities provide tolerance to a large number of simultaneous erasures. A code as above is maximally-recoverable, if it corrects all erasure patterns which are information theoretically recoverable given the code topology. In this paper we present explicit families of maximally-recoverable codes with locality. We also initiate the study of the trade-off between maximal recoverability and alphabet size.
\end{abstract}

\section{Introduction}\label{Sec:Intro}

We say that a certain coordinate of an error-correcting code has locality $r$ if, when erased, the value at this coordinate can be recovered by accessing at most $r$ other coordinates. Recently there has been two lines of work on codes with locality.

In~\cite{GHSY} motivated by applications to distributed storage~\cite{HuangSX} the authors studied systematic linear $[n,k]$ codes that tolerate up to $h+1$ erasures, but also have locality $r$ for all information coordinates. In canonical codes of~\cite{GHSY}, $r$ divides $k$ and $n=\frac{k}{r}+h.$ Data symbols are partitioned into $\frac{k}{r}$ groups of size $r.$ For each data group there is a local parity storing the XOR of respective data symbols. There also are $h$ heavy parities where each heavy parity depends on all $k$ data symbols. In what follows we refer to codes above as {\it data-local} $(k,r,h)$-codes.

In~\cite{BHH} motivated by applications to data storage on SSDs the authors studied systematic linear $[n,k]$ codes with two extra parameters $r$ and $h,$ where $r\mid(k+h).$ In codes of~\cite{BHH} there are $k$ data symbols and $h$ heavy parity symbols. These $(k+h)$ symbols  are partitioned into $\frac{k+h}{r}$ groups of size $r.$ For each group there is a local parity storing the XOR of respective symbols. Thus $n=k+h+\frac{k+h}{r}.$ Unlike the codes of~\cite{GHSY}, codes of~\cite{BHH} provide locality for all symbols data or parity. In what follows we refer to codes above as {\it local} $(k,r,h)$-codes.

Observe that our descriptions of code families above are so far incomplete. For every parity symbol we specified other symbols that it depends on, i.e., we have fixed codes' topology. To complete defining the codes we need to set coefficients in the heavy parity symbols. Different choices of coefficients lead to codes with different erasure correcting capabilities. Ideally, we would like our codes to correct all patterns of erasures that are correctable for some setting of coefficients in heavy parities. Such codes exist and are called Maximally Recoverable (MR)~\cite{CHL}.

An important problem left open by earlier work has been to come up with explicit maximally-recoverable data-local and local codes over small finite fields.

\subsection{Our results and related work}

In this paper we make progress on the problem above. We present the first explicit families of maximally-recoverable data-local and local codes for all values of $k,r$ and $h.$ Prior to our work infinite explicit families of maximally-recoverable local codes were known only for $h=1$ and $h=2.$ There have also been few constructions that involved computer search for coefficients~\cite{BHH,Blaum}. Our codes improve upon the earlier constructions both in concrete settings and asymptotically.

In the asymptotic setting of $h=O(1), r=O(1),$ and growing $k$ our codes use alphabet of size $O\left(k^{h-1}\right).$ In the case of $h\geq 2^r$ the alphabet size can be reduced $O\left(k^{\left\lceil (h-1)\left(1-\frac{1}{2^r}\right)\right\rceil}\right).$ We also obtain further improvements in the special cases of $h=3$ and $h=4.$ The only lower bound for the alphabet size known currently comes from results on the main conjecture for MDS codes~\cite{MS} and is $\Omega(k).$ One way to construct maximally-recoverable local codes is by picking coefficients in heavy parities at random from a large enough finite field. In order to compare our constructions with random codes we show that random codes are not maximally recoverable (except with probability $o(1)$) unless the size of the finite field from which the coefficients are drawn exceeds $\Omega\left(k^{h-1}\right).$

Similarly to~\cite{BHH,Blaum} we construct our explicit codes via parity check matrices. As in~\cite{BHH} columns of our parity check matrices have the shape $(\alpha_i,\alpha_i^2,\ldots,\alpha_i^{2^{h-1}}).$ The key difference from the work of~\cite{BHH,Blaum} however is that we explicitly specify the sets $\{\alpha_i\}$ used in our constructions.

There are several other models of codes with locality in the literature. The ones most closely related to our work include SD codes~\cite{Blaum,SD_fail}, locally decodable codes~\cite{Y_now}, and regenerating codes~\cite{Dimakis_1}.

\subsection{Organization}
In section~\ref{Sec:Prelim} we formally define data-local and local $(k,r,h)$-codes. We introduce the notion of maximal recoverability, and show that maximally-recoverable local codes yield maximally-recoverable data-local codes. In section~\ref{Sec:Constructions} we give our two main code constructions. In section~\ref{Sec:Asymptotics} we analyze the asymptotic behavior of alphabet size in our codes for large message lengths. We also establish a simple lower bound on the alphabet size of maximally-recoverable local codes. Finally, we compare asymptotic parameters of our codes to asymptotic parameters of random codes. In section~\ref{Sec:OpenQ} we conclude with open questions.

\section{Preliminaries}\label{Sec:Prelim}

We use the following notation
\begin{itemize}

\item For an integer $n,$ $[n]=\{1,\ldots,n\};$

\item An $[n,k]$ code is a linear code encoding $k$-dimensional messages to $n$-dimensional codewords. Equivalently, one can think of an $[n,k]$ code as a $k$-dimensional subspace of an $n$-dimensional space over a finite field;

\item An $[n,k,d]$ code is an $[n,k]$ code whose minimal distance is at least $d;$

\item Let $C$ be an $[n,k]$ code and $S\subseteq [n].$ Puncturing $C$ in coordinates in $S$ means restricting $C$ to coordinates in $[n]\setminus S.$ It yields a $\left[k^\prime,n-|S|\right]$ code $C^\prime,$ where $k^\prime\leq k.$
\end{itemize}

We proceed to formally introduce the notion of locality~\cite{GHSY}.

\begin{definition}\label{Def:Locality}
Let $C$ be a linear $[n,k]$ code. We say that the $i$-th coordinate of $C$ has locality $r,$ if there exists a set $S\subseteq [n]\setminus\{i\},$ $|S|\leq r,$ such that across all codewords ${\bf c}\in C,$ the value of the coordinate ${\bf c}(i)$ is determined by values of coordinates $\{{\bf c}(j)\}, j\in S.$ Equivalently, the $i$-th coordinate has locality $r,$ if the dual code $C^{\perp}$ contains a codeword ${\bf c}$ of Hamming weight at most $r+1,$ where coordinate $i$ is in the support of ${\bf c}.$
\end{definition}

\begin{definition}\label{Def:DLcode}
Let $C$ be a linear systematic $[n,k]$ code. We say that $C$ is a $(k,r,h)$ data-local code if the following conditions are satisfied:
\begin{itemize}
\item $r\mid k$ and $n=\frac{k}{r}+h;$
\item Data symbols are partitioned into $\frac{k}{r}$ groups of size $r.$ For each such group there is one (local) parity symbol that stores the XOR of respective data symbols;
\item Remaining $h$ (heavy) parity symbols, may depend on all $k$ data symbols.
\end{itemize}
\end{definition}
In what follows we refer to a group of $r$ data symbols and their local parity defined above as a {\it local group}. Data-local codes have been studied in~\cite{HuangCL07,GHSY,PD_isit,PKLK,XOR_ELE,FY}. The importance of this topology was partially explained in~\cite[Theorem 9]{GHSY}.  There it has been shown that in case $h<r+1,$ any systematic $[n,k]$ code that corrects all patterns of $(h+1)$ erasures, provides locality $r$ for all data symbols, and has the lowest possible redundancy has to be a data-local $(k,r,h)$-code. The class of data local-codes is fairly broad as there is a lot of flexibility is choosing coefficients in heavy parities. Below we define data-local codes that maximize reliability.

\begin{definition}\label{Def:MRDLcode}
Let $C$ be a data-local $(k,r,h)$-code. We say that $C$ is maximally-recoverable if for any set $E\subseteq [n],$ where $E$ is obtained by picking one coordinate from each of $\frac{k}{r}$ local groups, puncturing $C$ in coordinates in $E$ yields a maximum distance separable $[k+h,k]$ code.
\end{definition}

A $[k+h,k]$ MDS code obviously corrects all patterns of $h$ erasures. Therefore a maximally-recoverable data-local $(k,r,h)$-code corrects all erasure patterns $E\subseteq [n]$ that involve erasing one coordinate per local group, and $h$ additional coordinates. We now argue that any erasure pattern  that is not dominated by a pattern above has to be uncorrectable.

\begin{lemma}\label{Lemma:DLuncorrectable}
Let $C$ be an arbitrary data-local $(k,r,h)$-code. Let $E\subseteq [n]$ be an erasure pattern. Suppose $E$ affects $t$ local groups and $|E|>t+h;$ then $E$ is not correctable.
\end{lemma}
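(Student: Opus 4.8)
The plan is to show that the erasure pattern $E$ destroys too much information: the coordinates of $C$ outside $E$ cannot determine the coordinates inside $E$, because a dimension count forces a nonzero codeword supported entirely within $E$ (equivalently, the punctured code has dimension strictly less than $k$). I would work with the generator matrix $G$ of $C$ (a $k\times n$ matrix of rank $k$) and argue that the $|E|$ columns indexed by $E$ are linearly dependent in a way that is not ``explained'' by the $t$ local parities touching those groups, leaving a genuine linear dependency that witnesses non-correctability.

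\medskip
\noindent\textbf{Key steps.} First I would recall the standard criterion: an erasure pattern $E$ is correctable for a linear code $C$ if and only if puncturing $C$ on $E$ preserves dimension, equivalently iff $C$ has no nonzero codeword whose support is contained in $E$, equivalently iff the columns of a parity-check matrix $H$ indexed by $[n]\setminus E$ still have rank equal to $n-k$. Second, I would bound the rank of the columns of $H$ indexed by $E$. Since $E$ meets only $t$ local groups, the only rows of $H$ corresponding to local parities that can have support inside $E$ are those $t$ local-parity rows; every other local-parity row is identically zero on the columns of $E$. Thus the submatrix $H_E$ (columns in $E$, all rows) has rank at most $t + h$, because it is supported on at most $t$ local-parity rows plus the $h$ heavy-parity rows. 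Third, since $H$ has $n-k = \frac{k}{r}+h - k + \dots$ — more usefully, $H$ has $n-k$ rows and the columns in $E$ span a space of dimension at most $t+h < |E|$, the $|E|$ columns indexed by $E$ are linearly dependent. Fourth, a nontrivial linear dependency among the columns of $H$ indexed by $E$ is exactly a nonzero codeword of $C$ supported within $E$; by the correctability criterion, $E$ is not correctable.

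\medskip
\noindent\textbf{Main obstacle.} The only delicate point is the rank bound $\operatorname{rank}(H_E)\le t+h$: I must set up the parity-check matrix $H$ in the ``right'' form, with one row per local parity and one row per heavy parity (so $n-k$ rows total, namely $\frac{k}{r}$ local-parity rows and $h$ heavy-parity rows after accounting for the systematic structure), and then observe that a local-parity row for a group disjoint from $E$ contributes nothing to $H_E$. One should be slightly careful that the systematic encoding does not introduce extra dependence of heavy-parity rows on data coordinates in a way that inflates the support — but since we are only counting \emph{rows} that can be nonzero on $E$, and at most $t+h$ of them qualify, the bound holds regardless of the coefficients. Once that is in place the dimension count $t+h<|E|$ immediately yields the dependency and hence a codeword supported in $E$, completing the proof.
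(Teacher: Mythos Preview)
Your argument is correct. Once you fix the natural parity-check matrix $H$ with $\frac{k}{r}$ local-parity rows (each supported on a single local group) and $h$ heavy-parity rows, the observation that only $t$ of the local rows can be nonzero on the columns indexed by $E$ gives $\operatorname{rank}(H_E)\le t+h<|E|$, hence a nonzero codeword supported in $E$. The one place your write-up wobbles is the abandoned computation ``$n-k=\frac{k}{r}+h-k+\dots$''; simply state $n-k=\frac{k}{r}+h$ and move on.

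Your route differs from the paper's. The paper argues by extension: assuming $E$ were correctable, it enlarges $E$ to $E'$ by erasing one coordinate in each of the $\frac{k}{r}-t$ untouched local groups (these extra erasures are recoverable from the local parities, so $E'$ is correctable iff $E$ is), and then notes $|E'|>\frac{k}{r}+h=n-k$, which exceeds the total redundancy and forces a drop in dimension. Your approach is more structural: instead of padding $E$ up to something bigger than the full row-count of $H$, you observe directly that most rows of $H$ are already zero on $E$, so the \emph{effective} row-count over $E$ is only $t+h$. The paper's version is slightly more portable (it never names $H$ and would adapt verbatim to settings where the parity-check matrix is less explicit), while yours is more direct and avoids the auxiliary pattern $E'$ altogether.
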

\begin{proof}
Suppose $E$ is correctable. We extend $E$ to a larger pattern of erasures $E^\prime$ erasing one arbitrary coordinate in each of $\frac{k}{r}-t$ local groups that are not affected by $E.$ Observe that $E^\prime$ is correctable if $E$ is correctable since each local group has a local parity. Note that the size of $E^\prime$ exceeds redundancy of the code $C,$ $\left|E^\prime\right|>\frac{k}{r}+h.$ Thus the dimension of $C$ restricted to coordinates outside of $E^\prime$ is below $k,$ and there are codewords in $C$ with identical projections on $[n]\setminus E^\prime.$ Therefore $E^\prime$ is not correctable.
\end{proof}

We now proceed to define local codes.

\begin{definition}\label{Def:LCode}
Let $C$ be a linear systematic $[n,k]$ code. We say that $C$ is a $(k,r,h)$ local code if the following conditions are satisfied:
\begin{itemize}
\item $r\mid (k+h)$ and $n=k+h+\frac{k+h}{r};$
\item There are $k$ data symbols and $h$ heavy parity symbols, where each heavy parity may depend on all data symbols;
\item These $k+h$ symbols are partitioned into $\frac{k+h}{r}$ groups of size $r.$ For each such group there is one (local) parity symbol that stores the XOR of respective symbols.
\end{itemize}
\end{definition}

We refer to a group of $r$ symbols and their local parity as a {\it local group}. As above we now introduce local codes that maximize reliability.

\begin{definition}\label{Def:MRLcode}
Let $C$ be a local $(k,r,h)$-code. We say that $C$ is maximally-recoverable if for any set $E\subseteq [n],$ where $E$ is obtained by picking one coordinate from each of $\frac{k+h}{r}$ local groups, puncturing $C$ in coordinates in $E$ yields a maximum distance separable $[k+h,k]$ code.
\end{definition}

Maximally recoverable local $(k,r,h)$-codes have been originally introduced in~\cite{BHH} under the name of partial-MDS codes. Similarly to the discussion following definition~\ref{Def:MRDLcode} it is easy to see that these codes correct all erasure patterns that involve erasing one coordinate per local group, and $h$ additional coordinates. Erasure patterns that are not dominated by such patterns are not correctable by any local $(k,r,h)$-code. The next lemma gives a simple reduction from local MR codes to data-local MR codes.

\begin{lemma}\label{Lemma:LtoDL}
Suppose there exists a local maximally-recoverable $(k,r,h)$-code $C$ over a finite field $\mathbb{F};$ then there exists a data-local maximally-recoverable $(k^\prime,r,h)$-code $C^\prime$ over the same field, where $k^\prime\leq k$ is the largest integer that is divisible by $r.$
\end{lemma}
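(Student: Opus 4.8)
The plan is to obtain $C'$ from $C$ by one shortening followed by one puncturing, after first re-normalizing the topology of $C$. Write the local groups of $C$ as $\Gamma_1,\dots,\Gamma_m$ with $m=(k+h)/r$; apart from the $m$ local parities, the remaining $k+h$ coordinates --- $k$ data and $h$ heavy --- are exactly the ones partitioned into the $\Gamma_i$. The first point is that we may freely re-choose which $h$ of these $k+h$ grouped coordinates play the role of heavy symbols: puncturing $C$ at the $m$ local parities is an instance of erasing one coordinate from each local group, so by Definition~\ref{Def:MRLcode} it yields a $[k+h,k]$ MDS code supported on the grouped coordinates; hence every $k$-subset of grouped coordinates is an information set of that code, and therefore --- since each local parity is an $\F$-linear combination of the symbols of its group --- an information set of $C$ itself. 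So declaring any $h$ of the grouped coordinates to be heavy again produces a systematic local $(k,r,h)$-code with the same group partition, and it is still maximally recoverable since the partition is all that Definition~\ref{Def:MRLcode} refers to.

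Now set $s:=k-k'=k\bmod r\in\{0,\dots,r-1\}$. Since $r\mid(k+h)$ we get $r\mid(s+h)$, and because $0\le s<r$ this forces $(s+h)/r=\lceil h/r\rceil$ and $m-\lceil h/r\rceil=(k-s)/r=k'/r$. Using the freedom above, re-designate the heavy coordinates so that $\Gamma_1$ holds exactly $s$ data coordinates and $r-s$ heavy ones, groups $\Gamma_2,\dots,\Gamma_{\lceil h/r\rceil}$ are entirely heavy, and groups $\Gamma_{\lceil h/r\rceil+1},\dots,\Gamma_m$ --- there are $k'/r$ of them --- are entirely data. Let $D_0$ be the set of $s$ data coordinates inside $\Gamma_1$, and $P_0$ the set of $\lceil h/r\rceil$ local parities of $\Gamma_1,\dots,\Gamma_{\lceil h/r\rceil}$. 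Define $C'$ to be $C$ shortened at $D_0$ (fix those coordinates to $0$ and delete them) and then punctured at $P_0$. A routine check shows that $C'$ is a data-local $(k',r,h)$-code over $\F$: its data coordinates are the $k'$ data coordinates of $\Gamma_{\lceil h/r\rceil+1},\dots,\Gamma_m$, which still form an information set; it has $h$ heavy coordinates and $k'/r$ local parities, each the XOR of the $r$ data coordinates of one all-data group; and its dimension equals $k-s=k'$, since shortening at the $s$ free data coordinates of $D_0$ lowers the dimension by $s$ while puncturing at $P_0$ lowers it by $0$ (after the shortening, each coordinate of $P_0$ has become a fixed linear combination of heavy coordinates that are still present).

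It remains to check that $C'$ is maximally recoverable. By Definition~\ref{Def:MRDLcode} and the discussion following it, it suffices to show that $C'$ corrects every erasure pattern $\mathcal{E}$ that picks one coordinate from each local group of $C'$ and $h$ further coordinates anywhere. Tracking the shortening and the puncturing through a dimension count, one finds $\dim(C'|_{[n']\setminus\mathcal{E}})=k'-\dim\{\,c\in C:\ \mathrm{supp}(c)\subseteq P_0\cup\mathcal{E}\,\}$ (with $n'$ the length of $C'$), so $\mathcal{E}$ is correctable for $C'$ precisely when $P_0\cup\mathcal{E}$ is correctable for $C$. But $P_0$ contributes exactly one coordinate to each of the heavy groups $\Gamma_1,\dots,\Gamma_{\lceil h/r\rceil}$, and the ``one per group'' part of $\mathcal{E}$ contributes exactly one coordinate to each all-data group, so $P_0\cup\mathcal{E}$ is contained in a pattern of the form (one coordinate per local group of $C$) together with ($h$ additional coordinates) --- and $C$, being maximally recoverable, corrects every such pattern. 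Hence $C'$ corrects $\mathcal{E}$, so $C'$ is maximally recoverable, as desired.

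The step I expect to be the crux is the normalization carried out in the first two paragraphs: obtaining a data dimension equal to exactly $k'$, rather than something smaller, forces the $h$ heavy symbols to occupy the fewest possible number $\lceil h/r\rceil$ of groups, and it is the re-designation argument --- resting on the MDS-ness of the grouped-coordinate projection guaranteed by maximal recoverability --- that makes this legitimate for an arbitrary given $C$. I would also be careful that shortening fixes coordinates to $0$ rather than erasing them, so that the dimension bookkeeping converting correctability for $C'$ into correctability for $C$ correctly retains the constraint ``$c$ vanishes on $D_0$''.
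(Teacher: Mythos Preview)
Your proof is correct and follows essentially the same route as the paper's: both arguments use the fact that puncturing $C$ at its local parities yields a $[k+h,k]$ MDS code to re-designate which $k$ of the grouped coordinates are ``data,'' then shorten at the $k-k'$ leftover data symbols and drop the local parities of the groups containing heavy symbols. The paper splits into the cases $r\mid k$ and $r\nmid k$ and is terser about verifying maximal recoverability, whereas you handle both cases uniformly via the count $(s+h)/r=\lceil h/r\rceil$ and make the MR check explicit through the dimension identity $\dim(C'|_{[n']\setminus\mathcal{E}})=k'-\dim\{c\in C:\mathrm{supp}(c)\subseteq P_0\cup\mathcal{E}\}$; but the underlying construction and the reason it works are the same.
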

\begin{proof}
Let $t=\frac{k+h}{r}.$ Let $G_1,\ldots,G_t\subseteq [n]$ be the local groups. $\cup_i G_i=[n].$ We refer to data symbols and heavy parity symbols of $C$ as {\it primary} symbols. Altogether primary symbols form a $[k+h,k]$ MDS code. Note that any $k$ symbols of an MDS code can be treated as information symbols. Next we consider two cases:
\begin{itemize}
\item $r\mid k.$ We treat $k$ primary symbols of $C$ that belong to local groups $\{G_i\},i\leq \frac{k}{r}$ as data symbols of $C^\prime.$ The code $C^\prime$ is obtained form the code $C$ by dropping local parity symbols from groups $G_i$ for $i> \frac{k}{r}.$ The code $C^\prime$ clearly satisfies definition~\ref{Def:DLcode}. Observe that $C^\prime$ also satisfies definition~\ref{Def:MRDLcode} as any code that can be obtained by dropping one coordinate per local group in $C^\prime$ can also be obtained by dropping one coordinate per local group in $C.$

\item $r\nmid k.$ Let $s=\lfloor\frac{k}{r}\rfloor.$ We refer to local groups $\{G_i\}, i\leq s$ as data groups. We refer to group $G_{s+1}$ as special. We treat $k^\prime$ primary symbols of $C$ that belong to data groups $\{G_i\},i\leq s$ as data symbols of $C^\prime.$ We fix some arbitrary $k-k^\prime$ primary symbols in the special group, and refer to them as special symbols. We denote the collection of special symbols by $S.$

    The code $C^\prime$ is obtained from the code $C$ by dropping all special symbols and $t-s$ local parities in groups other than data groups. Given an assignment of values to $k^\prime$ data symbols of $C^\prime$, we determine the values of heavy parities using the code $C$ assuming that all special symbols are set to zero.

    The code $C^\prime$ clearly satisfies definition~\ref{Def:DLcode}. It also satisfies definition~\ref{Def:MRDLcode} as any codeword that can be obtained by dropping one coordinate per local group in $C^\prime(x^\prime)$ can also be obtained by dropping one coordinate per local group in $C(x^\prime\circ 0^{k-k^\prime})$ restricted to $[n]\setminus S.$ The latter restriction does not affect the erasure correcting capability of the code as we are dropping coordinates that are identically zero.
\end{itemize}

This concludes the proof.
\end{proof}

\section{Code constructions}\label{Sec:Constructions}

In this section we give our two main constructions of local codes. We restrict our attention to finite fields of characteristic two. Let $\F$ be such a field. Let $S=\{\alpha_1,\ldots,\alpha_n\} \subseteq \F$ be a multi-set of $n$ elements. Let $A(S,h) = [a_{ij}]$ denote the $h \times n$ matrix where
\[ a_{ij} = \alpha_j^{2^{i-1}} \]
Let $\CC(\alpha,h) \subset \F^n$ be the linear code whose parity check matrix is $A$. Equivalently, $\CC(\alpha,h)$ contains all vectors ${\bf x} =
(x_1,\ldots,x_n)$ which satisfy the equations
\begin{equation}\label{Eqn:Frob}
\sum_{i=1}^n \alpha_i^{2^{j-1}}x_i = 0 \ \ \text{for}\ \ j = 1,\ldots,h.
\end{equation}

Let $\CC(\alpha,h)$ be an $[n,k,d]$ code. It is easy to see that $k \geq n -h$, hence by the Singleton bound, $d \leq h+1$. We are interested in sets $\{\alpha_i\}$ where $d = h+1$, so that the code $\CC(\alpha,h)$ is maximum distance separable. The following lemma characterizes such sets.

\begin{definition}\label{Def:KwiseIndep}
We say that the multi-set $S \subseteq \F$ is $t$-wise independent over a field $\F' \subseteq \F$ if every $T \subseteq S$ such that $|T| \leq t$ is linearly independent over $\F'$.
\end{definition}

\begin{lemma}
\label{lem:mds}
The code $\CC(S,h)$ has distance $h+1$ if and only if the multi-set $S$ is $h$-wise independent over the field $\F_2.$
\end{lemma}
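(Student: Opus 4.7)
The plan is to translate the distance condition directly into a statement about singularity of Moore matrices and then invoke the classical Moore determinant theorem. Observe that the minimum distance of $\CC(S,h)$ equals the smallest number of columns of the parity-check matrix $A=A(S,h)$ that are $\F$-linearly dependent, so $d=h+1$ iff every $h$ columns of $A$ are $\F$-linearly independent. For any $h$-subset $T=\{\alpha_{j_1},\ldots,\alpha_{j_h}\}\subseteq S$, the corresponding columns assemble into the $h\times h$ Moore matrix $M_T$ with $(i,\ell)$-entry $\alpha_{j_\ell}^{2^{i-1}}$, and the lemma reduces to showing that $M_T$ is nonsingular iff $T$ is $\F_2$-linearly independent.

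The ``only if'' direction is a short Frobenius argument. Given an $\F_2$-dependency $\sum_{\alpha\in T}c_\alpha\alpha=0$ with $c_\alpha\in\F_2$ not all zero, raising to the $2^{i-1}$-th power and using additivity of Frobenius in characteristic $2$ together with $c_\alpha=c_\alpha^{2^{i-1}}$ for $c_\alpha\in\F_2$ yields $\sum_{\alpha\in T}c_\alpha\alpha^{2^{i-1}}=0$ for every $i=1,\ldots,h$. Hence the vector placing $c_\alpha$ at the coordinate of $\alpha$ and zero elsewhere is a nonzero codeword of $\CC(S,h)$ of weight at most $h$, witnessing $d\leq h$.

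For the converse I would induct on $h$. The base case $h=1$ is trivial. For the inductive step, assume $\alpha_{j_1},\ldots,\alpha_{j_h}$ are $\F_2$-linearly independent and set $V=\Sp_{\F_2}(\alpha_{j_1},\ldots,\alpha_{j_{h-1}})$. The key fact, and the step I expect to take the most care, is that $L(Y)=\prod_{v\in V}(Y-v)$ is an $\F_2$-linearized polynomial, i.e., has the form $L(Y)=Y^{2^{h-1}}+\sum_{k=0}^{h-2}\lambda_k Y^{2^k}$; this is the standard char-$2$ identity, which one proves by noting that $L$ and any polynomial of the above linearized shape of degree $2^{h-1}$ vanishing precisely on $V$ with simple roots must agree. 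Granting this, the elementary row operation on $M_T$ that adds $\lambda_k$ times row $k+1$ to row $h$ for each $k=0,\ldots,h-2$ replaces the last row by $(L(\alpha_{j_1}),\ldots,L(\alpha_{j_h}))=(0,\ldots,0,L(\alpha_{j_h}))$, since $\alpha_{j_\ell}\in V$ for $\ell<h$ while $\alpha_{j_h}\notin V$ by independence. Expanding along this new last row yields $\det M_T=L(\alpha_{j_h})\cdot\det M'_T$, where $M'_T$ is the $(h-1)\times(h-1)$ Moore matrix on $\alpha_{j_1},\ldots,\alpha_{j_{h-1}}$; both factors are nonzero — the second by the inductive hypothesis — so $\det M_T\neq 0$, completing the argument.
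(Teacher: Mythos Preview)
Your proof is correct and follows the same reduction as the paper: both translate the distance condition into invertibility of the $h\times h$ minors $A_E$ (your $M_T$) of the parity-check matrix, which are Moore matrices. The paper then simply cites Lidl--Niederreiter for the equivalence between invertibility of a Moore matrix and $\F_2$-independence of its defining elements, whereas you supply a self-contained proof of that fact via the Frobenius argument (one direction) and the linearized-polynomial/row-reduction induction (the other). So the route is the same; you have just unpacked the black box the paper invokes.
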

\begin{proof}
Let ${\bf x} = (x_1,\ldots,x_n) \in \CC(S,h)$ be a codeword. The code $\CC(S,h)$ has distance $h+1$ iff every pattern of $h$ erasures is correctable. In other words, for any $E \subseteq [n]$, the values $\{x_i\}_{i\in E}$ can be recovered if we know the values of all $\{x_i\}_{i \in [n]\setminus E}$.  This requires solving the following system of equations:
\begin{equation}
\sum_{i \in E}\alpha_i^{2^{j-1}}x_i  = b_j, \ \ 1 \leq j \leq h
\end{equation}
which in turn requires inverting the $h \times h$ matrix $A_E$ which is the the minor of $A$ obtained by taking the columns in $E$. It is easy to see (e.g.,~\cite[Lemma 3.51]{LN}) that $A_E$ is invertible if and only if the multi-set $\{\alpha_i\}_{i\in E}$ is linearly independent over $\F_2.$
\end{proof}

Lemma~\ref{lem:mds} describes the effect of adding parity check constraints to $n$ otherwise independent variables. We now consider the effect of adding such constraints to symbols that already satisfy some dependencies. We work with the following setup. The $n$ coordinates of the code are partitioned into $\ell=\frac{n}{r+1}$ local groups, with group $i$ containing $r+1$ symbols $x_{i,1},\ldots,x_{i,r+1}$. Variables in each local group satisfies a parity check constraint $\sum_{s=1}^{r+1} x_{i,s} = 0$. Thus all code coordinates have locality $r.$ Let
$$
S = \{\alpha_{i,s}\}_{i \in [\ell], s \in [r+1]} \in \F^n
$$
We define the code $\CC(S,r,h)$ by the parity check equations
\begin{eqnarray}
\sum_{i=1}^{\ell}\sum_{s=1}^{r+1} \alpha_{i,s}^{2^{j-1}}x_{i,s} = 0 & \text{for} \ j \in \{1,\ldots,h\},\label{eq:global}\\
\sum_{s=1}^{r+1} x_{i,s} = 0 & \text{for} \ i \in \{1,\ldots,\ell\}\label{eq:local}
\end{eqnarray}
We refer to Equations~\eqref{eq:global} as global constraints and~\eqref{eq:local} as local constraints. The following proposition is central to our method:
\begin{proposition}\label{Prop:Central}
Let $\CC(S,r,h)$ be the code defined above. Let ${\bf e}\in [r+1]^\ell$ be a vector. Let $\CC^{-{\bf e}}=\CC^{-{\bf e}}(S,r,h)$ be the code obtained by puncturing $\CC(S,r,h)$ in positions $\{i,{\bf e}(i)\}_{i =1}^\ell$. Then $\CC^{-{\bf e}}$ is an MDS code if and only if the multi-set
\begin{align*}
T(S,{\bf e}) = \{\alpha_{i,s} + \alpha_{i,{\bf e}(i)}\}_{i \in [\ell], s \in [r+1]\setminus \{{\bf e}(i)\}}
\end{align*}
is $h$-wise independent.
\end{proposition}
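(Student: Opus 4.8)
The plan is to reduce the statement about $\CC^{-\mathbf{e}}$ to an instance of Lemma~\ref{lem:mds} by eliminating the punctured coordinates and the local constraints. Fix $\mathbf{e}\in[r+1]^\ell$. In each local group $i$, the local constraint $\sum_{s=1}^{r+1}x_{i,s}=0$ lets us write the punctured symbol as $x_{i,\mathbf{e}(i)}=\sum_{s\neq\mathbf{e}(i)}x_{i,s}$. Substituting this expression into each global constraint~\eqref{eq:global} for $j\in\{1,\ldots,h\}$, the coefficient of a surviving variable $x_{i,s}$ (with $s\neq\mathbf{e}(i)$) becomes $\alpha_{i,s}^{2^{j-1}}+\alpha_{i,\mathbf{e}(i)}^{2^{j-1}}$. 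Since $\F$ has characteristic two, the Frobenius map $x\mapsto x^2$ is additive, so $\alpha_{i,s}^{2^{j-1}}+\alpha_{i,\mathbf{e}(i)}^{2^{j-1}}=(\alpha_{i,s}+\alpha_{i,\mathbf{e}(i)})^{2^{j-1}}$. Hence the punctured code $\CC^{-\mathbf{e}}$, viewed in the $n-\ell$ surviving coordinates, is exactly the code $\CC(T(S,\mathbf{e}),h)$ whose parity check matrix has columns of the form $(\beta,\beta^2,\ldots,\beta^{2^{h-1}})$ for $\beta=\alpha_{i,s}+\alpha_{i,\mathbf{e}(i)}$ ranging over the multi-set $T(S,\mathbf{e})$.

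Given this identification, I would argue that $\CC^{-\mathbf{e}}$ is MDS if and only if $\CC(T(S,\mathbf{e}),h)$ has distance $h+1$. The dimension bookkeeping is: $\CC^{-\mathbf{e}}$ has length $n-\ell$ and the $h$ global constraints are the only ones remaining (the $\ell$ local constraints having been used up for the substitutions), so its dimension is at least $n-\ell-h$, equivalently at least $k$ where $k=n-\ell-h$ is the dimension of $\CC(S,r,h)$ when the global constraints are independent of the locals; being MDS means the dimension is exactly $k$ and the distance is $h+1$. This is precisely the MDS condition for $\CC(T(S,\mathbf{e}),h)$ as a code of length $n-\ell$ with $h$ parity checks, so Lemma~\ref{lem:mds} applies and tells us this holds if and only if $T(S,\mathbf{e})$ is $h$-wise independent over $\F_2$, which is the claim.

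The main obstacle I anticipate is making the reduction airtight rather than merely plausible — specifically, checking that the substitution defines a genuine linear isomorphism between $\CC^{-\mathbf{e}}$ and $\CC(T(S,\mathbf{e}),h)$ that preserves Hamming weight on the surviving coordinates, and handling the edge cases in the multi-set. Two points need care: first, $T(S,\mathbf{e})$ is a multi-set of size $n-\ell$, and it may contain the element $0$ (if $\alpha_{i,s}=\alpha_{i,\mathbf{e}(i)}$ for some $s\neq\mathbf{e}(i)$) or repeated elements — but this causes no trouble because Lemma~\ref{lem:mds} and Definition~\ref{Def:KwiseIndep} are already stated for multi-sets, and a multi-set containing $0$ or a repeat is automatically not even $1$- or $2$-wise independent, matching the fact that the corresponding punctured code fails to be MDS. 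Second, one should confirm that puncturing genuinely removes the coordinate from the support count, so that a low-weight dual codeword of $\CC(T(S,\mathbf{e}),h)$ pulls back to a low-weight dual codeword of $\CC^{-\mathbf{e}}$; this is immediate since the surviving coordinates of $\CC^{-\mathbf{e}}$ are in bijection with the columns of $A(T(S,\mathbf{e}),h)$. Once these verifications are in place, the proposition follows directly from Lemma~\ref{lem:mds} applied to the multi-set $T(S,\mathbf{e})$.
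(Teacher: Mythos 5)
Your proposal is correct and follows essentially the same route as the paper's proof: use the local parity to solve for $x_{i,\mathbf{e}(i)}$, substitute into the global constraints, use additivity of Frobenius in characteristic two to combine coefficients into $(\alpha_{i,s}+\alpha_{i,\mathbf{e}(i)})^{2^{j-1}}$, identify $\CC^{-\mathbf{e}}$ with $\CC(T(S,\mathbf{e}),h)$, and invoke Lemma~\ref{lem:mds}. The paper simply renames coordinates to assume $\mathbf{e}(i)=r+1$ and records the identity $\CC^{-\mathbf{e}}$ is $[k+h,k]$ without the edge-case discussion, but the substance is the same.
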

\begin{proof}
Note that $\CC^{-{\bf e}}$ is a $[k+h,k]$ code. To prove that it is MDS, we will use the local parity constraints to eliminate the punctured locations and then use Lemma~\ref{lem:mds}. Firstly, by renumbering variables (and coefficients $\{\alpha_{i,s}\}$) in each local group, we may assume ${\bf e}(i) = r+1.$ By the local parity check equations,
$$ x_{i,r+1} =  \sum_{s=1}^{r}x_{i,s}. $$
We use these to eliminate $x_{i,r+1}$ from the global parity check equations for $j \in [h]:$
\begin{align*}
0 & = \sum_{i=1}^{\ell}\left(\sum_{s=1}^{r+1} \alpha_{i,s}^{2^{j-1}}x_{i,s}\right) \\
& = \sum_{i=1}^{\ell}\left(\left(\sum_{s=1}^{r} \alpha_{i,s}^{2^{j-1}}x_{i,s}\right) + \alpha_{i,r+1}^{2^{j-1}}\left(\sum_{s=1}^{r}x_{i,s}\right)\right) \\
& = \sum_{i=1}^{\ell}\left(\sum_{s=1}^{r}
(\alpha_{i,s}^{2^{j-1}} + \alpha_{i,r+1}^{2^{j-1}})x_{i,s}\right) \\
& = \sum_{i=1}^{\ell}\left(\sum_{s=1}^{r}
(\alpha_{i,s} + \alpha_{i,r+1})^{2^{j-1}}x_{l,i}\right).
\end{align*}
Let $T =\{\alpha_{i,s} + \alpha_{i,r+1}\}_{i \in [\ell],s \in[r]}.$ By Lemma~\ref{lem:mds}, the code $\CC^{-{\bf e}}$ is MDS if and only if $T$ is $h$-wise independent.
\end{proof}

Proposition~\ref{Prop:Central} reduces constructing local MR codes to obtaining multi-sets $S\subseteq \mathbb{F}$ such that all sets $T(S,{\bf e})$ are $h$-wise independent. In what follows we give two constructions of such multi-sets.

\subsection{Basic construction}\label{Sec:BasicConstruction}

\begin{lemma}\label{Lemma:2hyeildsTse}
Let $S\subseteq \mathbb{F}, |S|=n$ be a set that is $2h$-wise independent over as subfield $\mathbb{F}^\prime.$ Let $r$ be arbitrary such that $\ell=\frac{n}{r+1}$ is an integer. Then for all ${\bf e}\in [r+1]^\ell$ the set $T(S,{\bf e})$ is $h$-wise independent over $\mathbb{F}^\prime.$
\end{lemma}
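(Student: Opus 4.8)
The goal is to show that if $S$ is $2h$-wise independent over $\mathbb{F}'$, then every set $T(S,\mathbf{e}) = \{\alpha_{i,s} + \alpha_{i,\mathbf{e}(i)}\}_{i\in[\ell],\, s\in[r+1]\setminus\{\mathbf{e}(i)\}}$ is $h$-wise independent over $\mathbb{F}'$. The plan is to argue by contradiction: suppose some subset $U \subseteq T(S,\mathbf{e})$ with $|U| \leq h$ satisfies a nontrivial $\mathbb{F}'$-linear dependence. I will lift this dependence to a dependence among the original elements of $S$ and show it involves at most $2h$ of them, contradicting $2h$-wise independence.

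Concretely, write a nontrivial relation $\sum_{(i,s)\in U} c_{i,s}(\alpha_{i,s} + \alpha_{i,\mathbf{e}(i)}) = 0$ with $c_{i,s}\in\mathbb{F}'$ not all zero. Expanding, this becomes $\sum_{(i,s)\in U} c_{i,s}\,\alpha_{i,s} + \sum_{i} \bigl(\sum_{s:(i,s)\in U} c_{i,s}\bigr)\alpha_{i,\mathbf{e}(i)} = 0$, a linear combination over $\mathbb{F}'$ of elements of $S$. The key counting step: the first sum involves at most $|U| \leq h$ distinct elements $\alpha_{i,s}$ (those appearing in $U$), and the second sum involves at most one extra element $\alpha_{i,\mathbf{e}(i)}$ per local group $i$ that is touched by $U$; since $U$ touches at most $|U| \leq h$ local groups, the second sum contributes at most $h$ further distinct elements of $S$. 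Hence the whole relation involves at most $2h$ elements of $S$. Since the $\alpha_{i,\mathbf{e}(i)}$ are distinct from the $\alpha_{i,s}$ with $s\neq\mathbf{e}(i)$ within each group (and elements across groups are distinct as members of the set $S$), this is a genuine linear relation among at most $2h$ elements of $S$.

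It remains to check the relation is nontrivial, i.e., not all coefficients vanish. Pick some $(i_0,s_0)\in U$ with $c_{i_0,s_0}\neq 0$; here $s_0 \neq \mathbf{e}(i_0)$, so $\alpha_{i_0,s_0}$ appears only in the first sum and its coefficient is exactly $c_{i_0,s_0}\neq 0$. Thus the lifted relation is a nontrivial $\mathbb{F}'$-linear dependence among at most $2h$ distinct elements of $S$, contradicting the hypothesis that $S$ is $2h$-wise independent over $\mathbb{F}'$. Therefore no such $U$ exists and $T(S,\mathbf{e})$ is $h$-wise independent over $\mathbb{F}'$.

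The only place needing care — and the step I expect to be the main (minor) obstacle — is the bookkeeping of distinctness and multiplicities: one must be careful that $T(S,\mathbf{e})$ is a multi-set, that elements $\alpha_{i,\mathbf{e}(i)}$ being lifted back are not accidentally identified with one another or with the $\alpha_{i_0,s_0}$ carrying the nonzero coefficient, and that "at most $h$ groups touched" genuinely follows from $|U|\leq h$. All of this is routine once the indexing is set up, so no real difficulty is anticipated beyond careful accounting.
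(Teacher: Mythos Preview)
Your proof is correct and follows essentially the same approach as the paper's: expand a hypothetical dependence on $T(S,\mathbf{e})$ into a relation among elements of $S$, count that at most $2h$ distinct elements of $S$ are involved, and observe that the coefficient on some $\alpha_{i_0,s_0}$ with $s_0\neq \mathbf{e}(i_0)$ survives because such terms appear only once. Your version is slightly more careful than the paper's in that you carry explicit coefficients $c_{i,s}\in\mathbb{F}'$ (the paper writes the dependence as a bare sum, tacitly taking $\mathbb{F}'=\mathbb{F}_2$), but the argument is otherwise identical.
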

\begin{proof}
Assume the contrary. To simplify the notation we relabel variables and assume that ${\bf e}(i)=r+1$ for every $i\in [\ell].$ Let $D = \{i_j,s_j\}_{j=1}^d$ be a set of $d \leq h$ indices of $T$ such that
\begin{align*}
\sum_{j=1}^d\left(\alpha_{i_j,s_j} + \alpha_{i_j,r+1} \right)= 0
\end{align*}
We can rewrite this as
\begin{align*}
\sum_{j=1}^d\alpha_{i_j,s_j} + \sum_{j=1}^d\alpha_{i_j,r+1} = 0
\end{align*}
We claim that this gives a non-trivial relation between the coefficients $\{\alpha_{i,s}\}.$ The relation is non-trivial because the terms in the first summation occur exactly once (whereas terms in the second summation can occur multiple times depending on the set $D$ and could cancel).
\end{proof}

Observe that the task of constructing $n$-sized subsets of $\mathbb{F}_{2^t}$ that are $2h$-wise independent over $\mathbb{F}_2$ is equivalent to the task of constructing $[n,n-t,2h+1]$ binary linear codes, as elements of a $2h$-wise independent set can be used as columns of a $t\times n$ parity check matrix of such a code, and vice versa. Therefore any family of binary linear codes can be used to obtain maximally-recoverable local codes via Lemma~\ref{Lemma:2hyeildsTse} and Proposition~\ref{Prop:Central}. The next theorem gives local MR codes that one gets by instantiating the approach above with columns of the parity check matrix of a binary BCH code.

\begin{theorem}\label{Th:BasicConstruction}
Let positive integers $k,r,h$ be such that $r\mid (k+h).$ Let $m$ be the smallest integer such that $n=k+h+\frac{k+h}{r}\leq 2^m-1.$ There exists a maximally recoverable local $(k,r,h)$-code over the field $\mathbb{F}_{2^{hm}}.$
\end{theorem}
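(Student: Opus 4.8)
The plan is to combine Proposition~\ref{Prop:Central} with Lemma~\ref{Lemma:2hyeildsTse}, reducing the whole statement to the construction of a single set $S\subseteq\F_{2^{hm}}$ with $|S|=n$ that is $2h$-wise independent over $\Ft$. Suppose such an $S$ is available. Since $r\mid(k+h)$ we have $n=k+h+\tfrac{k+h}{r}=(r+1)\cdot\tfrac{k+h}{r}$, so $\ell:=\tfrac{n}{r+1}=\tfrac{k+h}{r}$ is an integer; partitioning $S$ arbitrarily into $\ell$ local groups of size $r+1$ supplies the coefficients $\{\alpha_{i,s}\}$ for the code $\CC(S,r,h)$ of Section~\ref{Sec:BasicConstruction}. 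Lemma~\ref{Lemma:2hyeildsTse}, applied with the subfield $\Ft$, then guarantees that $T(S,{\bf e})$ is $h$-wise independent over $\Ft$ for every ${\bf e}\in[r+1]^\ell$, and Proposition~\ref{Prop:Central} upgrades this to: $\CC^{-{\bf e}}(S,r,h)$ is a $[k+h,k]$ MDS code for every ${\bf e}$. This is exactly the requirement of Definition~\ref{Def:MRLcode}; taking ${\bf e}$ to puncture one fixed coordinate per group shows that the $k+h$ primary coordinates of $\CC(S,r,h)$ already form a $[k+h,k]$ MDS code and hence may be put in systematic form with $k$ data symbols, so $\CC(S,r,h)$ is a maximally recoverable local $(k,r,h)$-code over $\F_{2^{hm}}$.

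It remains to produce $S$, and here I would use the equivalence recorded just before the theorem: an $n$-element subset of $\F_{2^t}$ that is $2h$-wise independent over $\Ft$ is the same data as the set of columns of a $t\times n$ parity check matrix of a binary linear code of minimum distance at least $2h+1$. So I need such a code of block length $n\le 2^m-1$ with at most $t=hm$ parity checks, and I would take the primitive narrow-sense binary BCH code of length $2^m-1$ with designed distance $2h+1$. Fixing a primitive element $\beta$ of $\F_{2^m}$, this code has a parity check matrix whose columns are $\big(\beta^{i},\beta^{3i},\beta^{5i},\ldots,\beta^{(2h-1)i}\big)$ for $i=0,\ldots,2^m-2$, read over $\Ft$ after expanding each $\F_{2^m}$-coordinate into $m$ bits; by the BCH bound its distance is at least $2h+1$, equivalently every $2h$ of these columns are $\Ft$-linearly independent, and the columns are pairwise distinct and nonzero because $\beta$ is primitive. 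Keeping the first $n\le 2^m-1$ columns yields a set $S\subseteq(\F_{2^m})^h\cong\F_{2^{hm}}$ with the required properties.

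The one genuinely nonroutine point is the count $t\le hm$: a priori the $2h$ consecutive roots $\beta,\beta^{2},\ldots,\beta^{2h}$ that force distance $2h+1$ would suggest $2h$ evaluation rows and hence $2hm$ bits. The saving is the standard characteristic-two fact that $\beta^{2j}$ is a Galois conjugate of $\beta^{j}$ and therefore imposes the same $\Ft$-linear constraint, so only the odd exponents $\beta,\beta^{3},\ldots,\beta^{2h-1}$ are needed; the generator polynomial of the code is then the least common multiple of their minimal polynomials, a product of at most $h$ polynomials each of degree at most $m$, and the redundancy is at most $hm$. (The degenerate range $n\le 2h$ is even easier, since then $hm\ge n$ and any $n$ linearly independent vectors of $\F_2^{hm}$ already form a $2h$-wise independent set.) Feeding the resulting $S$ into the reduction of the first paragraph completes the proof.
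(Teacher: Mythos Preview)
Your proof is correct and follows essentially the same route as the paper: reduce via Proposition~\ref{Prop:Central} and Lemma~\ref{Lemma:2hyeildsTse} to exhibiting a $2h$-wise independent $n$-subset of $\F_{2^{hm}}$, and take the columns $(\beta_i,\beta_i^3,\ldots,\beta_i^{2h-1})$ of a binary BCH parity check matrix over $\F_{2^m}$. Your write-up is in fact more detailed than the paper's, which simply asserts ``it is not hard to see that $S$ is $2h$-wise independent''; you spell out the characteristic-two conjugacy argument that cuts the redundancy from $2hm$ to $hm$, and you note the harmless degenerate range $n\le 2h$.
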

\begin{proof}
Let $S^\prime=\{\beta_1,\ldots,\beta_n\}$ be an arbitrary subset of non-zero elements of $\mathbb{F}_{2^m}.$ Consider $S=\{\alpha_1,\ldots,\alpha_n\}\subseteq \mathbb{F}_{2^{mh}}$ where for all $i\in [n],$ $\alpha_i=(\beta_i,\beta_i^3,\ldots,\beta_i^{2h-1})$ when we treat $\mathbb{F}_{2^{mh}}$ as an $h$-dimensional linear space over $\mathbb{F}_{2^m}.$ It is not hard to see that the set $S$ is $2h$-wise independent over $\mathbb{F}_2.$ Thus by Lemma~\ref{Lemma:2hyeildsTse} and Proposition~\ref{Prop:Central} the code $\CC(S,r,h)$ is a maximally recoverable local $(k,r,h)$-code.
\end{proof}

\subsection{Optimized construction}\label{Sec:OptimizedConstruction}

In the previous section we used $2h$-wise independence of the set $S$ to ensure $h$-independence of sets $T(S,{\bf e}).$ In some cases this is on overkill, and one can ensure $h$-independence of sets $T(S,{\bf e})$ more economically.
\begin{definition}\label{Def:WeakIndependence}
We say that the set $S \subseteq \F$ is $t$-wise weakly independent over $\mathbb{F}_2 \subseteq \F$ if no set $T \subseteq S$ where $2 \leq |T| \leq t$ has the sum of its elements equal to zero.
\end{definition}
Unlike independent sets, weakly independent sets may include the zero element. The following proposition presents our approach is a general form.

\begin{proposition}\label{Prop:OptimizeCore}
Let positive integers $k,r,h$ be such that $\ell =\frac{k+h}{r}$ is an integer. Suppose there exists an $(r+1)$-sized set $S_1\subseteq \mathbb{F}_{2^a}.$ If $h$ is even we require $S_1$ to be $h$-weakly independent over $\mathbb{F}_2;$ otherwise we require $S_1$ to be $(h+1)$-weakly independent over $\mathbb{F}_2.$ Further suppose that there exists an $\ell$-sized set $S_2\subseteq \mathbb{F}_{2^b}$ that is $h$-independent over $\mathbb{F}_{2^a};$ then the code $\CC(S_1\cdot S_2,r,h)$ is a maximally recoverable local $(k,r,h)$-code over the field $\mathbb{F}_{2^b}.$
\end{proposition}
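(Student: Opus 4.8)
The plan is to apply Proposition~\ref{Prop:Central}, so it suffices to show that for every $\mathbf{e}\in[r+1]^\ell$ the multi-set $T(S_1\cdot S_2,\mathbf{e})$ is $h$-wise independent over $\mathbb{F}_2$ (indeed over $\mathbb{F}_{2^a}$ would be more than enough). Here $S_1\cdot S_2 = \{\gamma\delta : \gamma\in S_1,\ \delta\in S_2\}$, and I will index the coordinates of the code so that local group $i\in[\ell]$ carries the elements $\{\delta_i\gamma_s\}_{s\in[r+1]}$, where $S_2=\{\delta_1,\ldots,\delta_\ell\}$ and $S_1=\{\gamma_1,\ldots,\gamma_{r+1}\}$. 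After relabeling so that $\mathbf{e}(i)=r+1$ for every $i$, the set in question is
\[
T = \{\delta_i(\gamma_s+\gamma_{r+1})\}_{i\in[\ell],\ s\in[r]}.
\]
First I would record the key structural observation: the elements $\gamma_s+\gamma_{r+1}$ for $s\in[r]$ all lie in $\mathbb{F}_{2^a}$, so $T$ is a union of $\ell$ "blocks", the $i$-th block consisting of $\delta_i$ times a fixed collection of $r$ elements of $\mathbb{F}_{2^a}$.

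**Next** I would set up the dependency we must rule out. Suppose some sub-multi-set of $T$ of size $d\le h$ sums to zero; grouping by the index $i$, this reads $\sum_{i\in I}\delta_i\cdot c_i = 0$ where $I\subseteq[\ell]$ is the set of groups that appear, and for each $i\in I$ the coefficient $c_i = \sum_{s\in D_i}(\gamma_s+\gamma_{r+1})$ is a sum of at least one and at most $h$ of the elements $\gamma_s+\gamma_{r+1}$ (with $D_i\subseteq[r]$ nonempty and $\sum_{i\in I}|D_i| = d\le h$). Since $S_2$ is $h$-wise independent over $\mathbb{F}_{2^a}$ and $|I|\le d\le h$, the only way $\sum_{i\in I}\delta_i c_i=0$ with not all $c_i$ vanishing is impossible unless in fact $c_i = 0$ for every $i\in I$. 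So the whole problem reduces to showing that no nonempty subset $D\subseteq[r]$ with $|D|\le h$ satisfies $\sum_{s\in D}(\gamma_s+\gamma_{r+1}) = 0$.

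**Then** comes the case analysis on the parity of $h$, which is where the weak-independence hypothesis on $S_1$ gets used and is the main obstacle to get right. Writing $\sum_{s\in D}(\gamma_s+\gamma_{r+1}) = \sum_{s\in D}\gamma_s + |D|\,\gamma_{r+1}$, and noting $|D|\,\gamma_{r+1}$ equals $\gamma_{r+1}$ if $|D|$ is odd and $0$ if $|D|$ is even (characteristic two), the vanishing of this expression says precisely that a subset of $S_1$ sums to zero — namely $D$ itself if $|D|$ is even, or $D\cup\{r+1\}$ if $|D|$ is odd. In either case this is a subset of $S_1$ of size between $1$ and $|D|+1\le h+1$ summing to zero; moreover it has size at least $2$ unless $|D|=1$ and the single element $\gamma_s+\gamma_{r+1}$ is zero, i.e.\ $\gamma_s=\gamma_{r+1}$, which cannot happen since the elements of $S_1$ are distinct (as $S_1$ is a set). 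So we get a genuine weak dependence in $S_1$ of size in $\{2,\ldots,|D|+1\}$. When $h$ is odd, $|D|\le h$ gives size $\le h+1$, contradicting $(h+1)$-weak independence. When $h$ is even, I need the size to be $\le h$: if $|D|$ is even then the bad set is $D$ itself of size $\le h$; if $|D|$ is odd then $|D|\le h-1$ (since $|D|\le h$ and $h$ is even), so $D\cup\{r+1\}$ has size $\le h$ — either way we contradict $h$-weak independence. This parity bookkeeping is the only delicate point; everything else is direct. Having reached a contradiction in all cases, $T$ is $h$-wise independent, and Proposition~\ref{Prop:Central} finishes the proof. I should also note in passing that, having relabeled group by group, the relabeling is harmless because the defining equations~\eqref{eq:local} are symmetric and~\eqref{eq:global} only reorders summands.
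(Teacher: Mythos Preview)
Your proposal is correct and follows essentially the same approach as the paper: reduce to Proposition~\ref{Prop:Central}, group the putative dependency by local index, use $h$-independence of $S_2$ over $\mathbb{F}_{2^a}$ for the outer sum, and use weak independence of $S_1$ together with the parity bookkeeping on $|D|$ for the inner sums. The only cosmetic difference is that the paper argues inner-then-outer (each inner sum is nonzero, hence the outer sum is nonzero) while you argue outer-then-inner (the outer relation forces all $c_i=0$, then derive a contradiction); your treatment of the even/odd case split is also somewhat more explicit than the paper's terse count of ``at most $h$'' versus ``at most $h+1$'' terms, but the content is identical.
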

\begin{proof}
Let $S_1=\{\xi_1,\ldots,\xi_{r+1}\}.$ Let $S_2=\{\lambda_1,\ldots,\lambda_\ell\}.$ For $i\in [\ell],s\in [r+1],$ we set $\alpha_{i,s}=\lambda_i\xi_s.$ By Proposition~\ref{Prop:Central} it suffices to show that for all ${\bf e}\in [r+1]^\ell,$ the set $T(S_1\cdot S_2,{\bf e})$ is $h$-independent over $\mathbb{F}_2.$ Assume the contrary. To simplify the notation we relabel variables and assume that ${\bf e}(i)=r+1$ for every $i\in [\ell].$ Let $D = \{i_j,s_j\}_{j=1}^d$ be a set of $d \leq h$ indices of $T$ such that
\begin{align*}
\sum_{j=1}^d\left(\alpha_{i_j,s_j} + \alpha_{i_j,r+1} \right)= 0
\end{align*}
We can rewrite this as
\begin{align*}
\sum_{t\in [\ell]} \lambda_t\cdot\sum_{j\ :\ i_j=t}\left(\xi_{t,s_j}+\xi_{t,r+1}\right) =0.
\end{align*}
Observe that after cancelations each non-empty inner sum above involves at least $2$ terms. When $h$ is even it involves at most $h$ terms; when $h$ is odd it involves at most $h+1$ terms. Therefore each inner sum is non-zero by the properties of the set $S_1.$ Also note that the outer sum involves at most $h$ terms $\lambda_t$ with non-zero coefficients from $\mathbb{F}_{2^a}$ and thus is also non-zero by the properties of the set $S_2.$
\end{proof}

We now instantiate Proposition~\ref{Prop:OptimizeCore} with a certain particular choice of independent sets. Our sets come from columns of a parity check matrix of an extended BCH code.
\begin{theorem}\label{Th:OptimizedConstruction}
Let positive integers $k,r,h$ be such that $\ell =\frac{k+h}{r}$ is an integer. Let $m$ be the smallest integer such that $r\mid m$ and $\ell\leq 2^m;$ then there exists a maximally recoverable local $(k,r,h)$-code over the field $\mathbb{F}_{2^t}$ for $t=r+m\left\lceil (h-1)\left(1-\frac{1}{2^r}\right)\right\rceil.$
\end{theorem}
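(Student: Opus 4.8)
The plan is to invoke Proposition~\ref{Prop:OptimizeCore} with a carefully chosen pair $(S_1, S_2)$ coming from extended-BCH parity check matrices, and to verify that the field in which the resulting code lives has the claimed size $2^t$ with $t = r + m\lceil (h-1)(1-2^{-r})\rceil$. First I would set up $S_2$: we need an $\ell$-sized set that is $h$-independent over $\F_{2^m}$. The standard way is to take columns of a parity check matrix of an MDS-type (e.g. extended Reed--Solomon / extended BCH) code over $\F_{2^m}$; since $\ell \le 2^m$, such a set exists inside $\F_{(2^m)^h} = \F_{2^{mh}}$, so $b = mh$ would be the naive bound. The key saving comes from a more economical choice of $S_1$ and from the fact that when $h$ is even we only need $h$-weak (not $(h+1)$-weak) independence of $S_1$, which lets us shave off coordinates.

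Next I would build $S_1$: we need an $(r+1)$-sized subset of $\F_{2^a}$ that is $h$-weakly independent over $\F_2$ (if $h$ even) or $(h+1)$-weakly independent (if $h$ odd). By Definition~\ref{Def:WeakIndependence} weak independence only forbids \emph{subsets summing to zero}, which is strictly weaker than linear independence — in particular the zero element is allowed, and more importantly the parity-check-matrix translation only needs to kill small \emph{even-weight} coincidences. Concretely, a set that is $t$-weakly independent over $\F_2$ corresponds to columns of a binary code of dual distance $> t$, but because only sums (not arbitrary linear combinations with the $0/1$ field — which is the same thing here) matter, one gets $r+1$ such elements inside $\F_{2^a}$ with $a = \lceil \log_2(\text{something like } \binom{r+1}{\le t/2})\rceil$, i.e. roughly $a \approx \log_2 \binom{r+1}{h/2}$. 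Then, crucially, $S_1 \cdot S_2 \subseteq \F_{2^b}$ only if $\F_{2^a} \subseteq \F_{2^b}$, so we want $a \mid b$ and $b$ as small as possible; the construction of $S_2$ as $h$-independent over $\F_{2^a}$ rather than over $\F_{2^m}$ means $b$ should be taken as $m$ times the number of "extra" coordinates needed, namely $b = a + m\cdot(\text{number of additional BCH rows})$. Balancing these — choosing $a$ and the BCH parameters so that the total exponent telescopes to $r + m\lceil (h-1)(1-2^{-r})\rceil$ — is the arithmetic heart of the argument.

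The main obstacle, and the step I would spend the most care on, is the exact bookkeeping that produces the factor $(1 - 2^{-r})$. This factor strongly suggests that the extended-BCH code used for $S_1$ has \emph{rate roughly $2^{-r}$} relative to its length, so that the $h-1$ "heavy" parity constraints needed for $S_2$-independence over $\F_{2^a}$ get compressed by a $(1-2^{-r})$ factor because $a$ itself already carries $r$ of the needed bits (note $t$ starts with a bare $+r$, matching $|S_1| = r+1$ living in a field of size $\approx 2^r$, and that $\F_{2^a}$ with $a \approx r$ is the ground field over which the $h$-independence of $S_2$ is measured). So the concrete plan is: (i) take $S_1$ to be all $r+1$ nonzero-ish elements giving a binary simplex/Hamming-like parity structure of length $r+1$ in $\F_{2^r}$ (appropriately adjusted by the parity-of-$h$ requirement, costing at most one extra doubling absorbed by the ceiling); (ii) take $S_2$ from an extended BCH code over $\F_{2^a}$ needing $\lceil(h-1)(1 - 1/2^r)\rceil$ symbols of $\F_{2^m}$ worth of redundancy, because the BCH designed distance $h$ over the length-$2^m$ alphabet, combined with the substructure already present in $\F_{2^a}$, only requires that many independent rows; (iii) set $t = r + m\lceil (h-1)(1-2^{-r})\rceil$ and check $a \le t$, $a \mid t$ (or at least $\F_{2^a}\subseteq\F_{2^t}$ and $\F_{2^m}\subseteq\F_{2^t}$, using $r\mid m$ and $r \le t$), so $S_1\cdot S_2$ genuinely lands in $\F_{2^t}$; (iv) apply Proposition~\ref{Prop:OptimizeCore} to conclude $\CC(S_1\cdot S_2, r, h)$ is maximally recoverable. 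Verifying that the BCH count in step (ii) is exactly $\lceil (h-1)(1-2^{-r})\rceil$ and not, say, one more, is the delicate point — it requires looking at which BCH syndromes are forced to vanish \emph{automatically} because of the algebraic relation $\alpha_{i,s} = \lambda_i \xi_s$ and the weak-independence of $S_1$, exactly the cancellation phenomenon exploited in the proof of Proposition~\ref{Prop:OptimizeCore}.
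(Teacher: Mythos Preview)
Your overall skeleton is right: one does apply Proposition~\ref{Prop:OptimizeCore} with $a=r$, and the ``$+r$'' in $t$ indeed comes from $S_1\subseteq\F_{2^r}$. But two of your key steps are off, and the second one is the real gap.

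First, $S_1$ is much simpler than you imagine. The paper takes $\xi_1,\dots,\xi_r$ to be any $\F_2$-basis of $\F_{2^r}$ and sets $\xi_{r+1}=0$, so $S_1=\{\xi_1,\dots,\xi_r,0\}$. Any nonempty subset of a basis has nonzero sum, and throwing in $0$ cannot change that; hence $S_1$ is $t$-weakly independent over $\F_2$ for \emph{every} $t$. No simplex/Hamming structure, no parity-of-$h$ case split, no $\log_2\binom{r+1}{h/2}$ bookkeeping --- just $a=r$.

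Second, and more importantly, your explanation of the factor $(1-2^{-r})$ is wrong. It has nothing to do with the product relation $\alpha_{i,s}=\lambda_i\xi_s$ or with the weak independence of $S_1$; those were already fully consumed in the proof of Proposition~\ref{Prop:OptimizeCore}. The saving lives entirely inside the construction of $S_2$ and comes from the Frobenius automorphism over $\F_{2^r}$. Concretely, pick distinct $\beta_1,\dots,\beta_\ell\in\F_{2^m}$ (possible since $\ell\le 2^m$) and set
\[
\lambda_i=\bigl(1,\beta_i,\beta_i^{2},\dots,\beta_i^{h-1}\bigr)
\quad\text{but \emph{omitting} every coordinate }\beta_i^{\,j}\text{ with }2^r\mid j,\ j\ge 1.
\]
Suppose $\sum_{i\in S}\gamma_i\lambda_i=0$ with $|S|\le h$ and $\gamma_i\in\F_{2^r}$. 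Reading coordinates, $\sum_i\gamma_i\beta_i^{\,j}=0$ for $j=0$ and for all $1\le j\le h-1$ with $2^r\nmid j$. Now if $2^r\mid j$, write $j=2^r j'$ with $j'<j$; since $\gamma_i^{2^r}=\gamma_i$, we have
\[
\sum_i\gamma_i\beta_i^{\,j}=\sum_i\gamma_i^{2^r}\bigl(\beta_i^{\,j'}\bigr)^{2^r}=\Bigl(\sum_i\gamma_i\beta_i^{\,j'}\Bigr)^{2^r}=0
\]
by induction on $j$. Thus $\sum_i\gamma_i\beta_i^{\,j}=0$ for all $0\le j\le h-1$, contradicting the Vandermonde determinant since $|S|\le h$ and the $\beta_i$ are distinct. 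Hence $S_2$ is $h$-independent over $\F_{2^r}$. The number of retained non-constant coordinates is $(h-1)-\lfloor(h-1)/2^r\rfloor=\lceil(h-1)(1-2^{-r})\rceil$, each lying in $\F_{2^m}$, and the constant coordinate $1$ lies in $\F_{2^r}$; since $r\mid m$, the whole vector lives in $\F_{2^t}$ with $t=r+m\lceil(h-1)(1-2^{-r})\rceil$. Proposition~\ref{Prop:OptimizeCore} then finishes the proof. Without this Frobenius mechanism you have no way to reach the claimed exponent.
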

\begin{proof}
Let $\{\xi_1,\ldots,\xi_r\}$ be an arbitrary basis of $\mathbb{F}_{2^r}$ over $\mathbb{F}_2.$ We set $\xi_{r+1}=0$ and $S_1=\{\xi_1,\ldots,\xi_{r+1}\}.$ Clearly, $S_1$ is $(h+1)$-weakly independent over $\mathbb{F}_2$ for all $h.$ Let $S_2^\prime=\{\beta_1,\ldots,\beta_\ell\}$ be an arbitrary subset of $\mathbb{F}_{2^m}.$ Consider $S_2=\{\lambda_1,\ldots,\lambda_\ell\}\subseteq \mathbb{F}_{2^{t}}$ where for all $i\in [\ell],$
\begin{equation}\label{Eqn:Lambda}
\lambda_i=(1,\beta_i,\beta_i^2,\ldots,\beta_i^{h-1})
\end{equation}
when we treat $\mathbb{F}_{2^{t}}$ as a linear space over $\mathbb{F}_{2^r}.$ The first coordinate in~(\ref{Eqn:Lambda}) is a single value in $\mathbb{F}_{2^r},$ while every other coordinate is an $\frac{m}{r}$-dimensional vector. In formula~(\ref{Eqn:Lambda}) we also omit every non-zero power $\beta_i^j$ whenever $2^r\mid j.$ We claim that the set $S_2$ is $h$-independent over $\mathbb{F}_{2^r}.$ Assume the contrary. Then for some non-empty set $S\subseteq [\ell],$ $|S|\leq h$ for all $0\leq j\leq h-1$ whenever $2^r\nmid j$ we have
\begin{equation}\label{Eqn:LambdaDep}
\sum_{i\in S} \gamma_i \lambda_i^j,
\end{equation}
where we assume $0^0=1$ and all $\{\gamma_i\}\in \mathbb{F}_{2^r}.$ By standard properties of Frobenius automorphisms~(\ref{Eqn:LambdaDep}) implies
$$
\sum_{i\in S} \gamma_i \lambda_i^j,
$$
for all $0\leq j\leq h-1$ which contradicts the proprieties of the Vandermonde determinant.
\end{proof}

\begin{example}\label{Example:60_80}
Instantiating Theorem~\ref{Th:OptimizedConstruction} with $k=60,$ $r=h=4,$ we obtain a $[80,60,7]$ maximally recoverable $(60,4,4)$ local code over the field $\mathbb{F}_{2^{16}}$. Prior to our work~\cite[Theorem 4.2]{BHH} a code with such parameters was not known to exist over any field of size below $2^{80}.$
\end{example}

In the proof of Theorem~\ref{Th:OptimizedConstruction} we set $S_1$ to be a basis of $\mathbb{F}_{2^r}$ augmented with a zero. After that we could use columns of a parity check matrix of any linear $\left[\ell,\ell-\frac{t}{r},h+1\right]$ code over $\mathbb{F}_{2^r}$ to define the set $S_2\subseteq \mathbb{F}_{2^t}$ and obtain a MR local $(k,r,h)$-code over $\mathbb{F}_{2^t}.$ While we used columns of the parity check matrix of an extended BCH code, other choices sometimes yield local MR codes over smaller alphabets.

\subsection{Further improvements for $h=3$ and $h=4$}

In this section we carry out the steps outlined above and present codes that improve upon the codes of Theorem~\ref{Th:OptimizedConstruction} for $h=3$ or $4$ and large $k.$ We replace BCH codes in the construction of Theorem~\ref{Th:OptimizedConstruction} with better codes. The codes we use are not new~\cite{Dumer,YD}.

\begin{theorem}\label{Th:H3}
Let positive integers $k,r,h=3$ be such that $\ell =\frac{k+h}{r}$ is an integer. Let $m$ be the smallest even integer such that $\ell\leq 2^{rm};$ then there exists a maximally recoverable local $(k,r,3)$-code over the field $\mathbb{F}_{2^t}$ for $t=r(\frac{3m}{2}+1).$
\end{theorem}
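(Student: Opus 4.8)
The plan is to follow the recipe laid out after Theorem~\ref{Th:OptimizedConstruction}: keep $S_1 = \{\xi_1,\ldots,\xi_r,0\}$ equal to a basis of $\F_{2^r}$ over $\F_2$ together with a zero (so $S_1$ is $(h+1)$-weakly independent over $\F_2$, as needed since $h=3$ is odd), and replace the BCH code used to build $S_2$ with a better $\left[\ell, \ell - \frac{t}{r}, 4\right]$ code over $\F_{2^r}$. Concretely, I would use a (shortening of a) double-error-correcting code of the type constructed by Dumer and Zinov'ev / Dumer (the references~\cite{Dumer,YD}): over a field $\F_q$ with $q=2^{rm/2}$ one has a $q$-ary code of length roughly $q^2$, minimum distance $4$, and redundancy $3$ (rather than the $4$ that a Reed--Solomon/BCH code of distance $5$ would force, or the wasteful count the generic BCH bound gives here). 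Mapping the $\frac{t}{r}\times \ell$ parity-check matrix of this code to a subset $S_2 \subseteq \F_{2^t}$ via the correspondence between $h$-independent multisets over $\F_{2^r}$ and parity-check columns of distance-$(h+1)$ codes over $\F_{2^r}$, one gets an $\ell$-sized set $S_2$ that is $3$-independent over $\F_{2^r}$ using only $t/r = \tfrac{3m}{2}+1$ symbols from $\F_{2^r}$, i.e.\ $t = r(\tfrac{3m}{2}+1)$ as claimed. Then Proposition~\ref{Prop:OptimizeCore} with $S = S_1\cdot S_2$ immediately yields a maximally recoverable local $(k,r,3)$-code over $\F_{2^t}$.

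The key steps, in order, are: (1) recall the explicit distance-$4$ code over $\F_{Q}$, $Q = 2^{rm/2}$, of length $\ge \ell$ and redundancy $3$ from~\cite{Dumer,YD} --- this is where $m$ is required to be \emph{even}, so that $rm/2$ is an integer and $\F_{2^{rm/2}}$ contains $\F_{2^r}$ as a subfield; (2) shorten/puncture it to length exactly $\ell$, noting $\ell \le 2^{rm} = Q^2$ suffices for such codes to exist; (3) view its $3$-by-$\ell$ parity-check matrix, with entries in $\F_Q = \F_{2^{rm/2}}$, and hence with each column an element of $\F_{2^{3 \cdot rm/2}}$; but we must present the columns as elements of a space over $\F_{2^r}$, so we regard $\F_Q$ as an $\tfrac{m}{2}$-dimensional space over $\F_{2^r}$, making each column a vector of $3\cdot\tfrac{m}{2} = \tfrac{3m}{2}$ coordinates over $\F_{2^r}$; (4) invoke Lemma~\ref{lem:mds}'s analogue over $\F_{2^r}$ (the ``if and only if'' from Definition~\ref{Def:KwiseIndep}) to conclude that these columns form a $3$-independent multiset $S_2$ over $\F_{2^r}$, living in $\F_{2^t}$ with $t = r\cdot\tfrac{3m}{2}$ --- wait, this is $r\cdot\tfrac{3m}{2}$, and the extra ``$+1$'' in $t = r(\tfrac{3m}{2}+1)$ comes from the fact that, exactly as in~(\ref{Eqn:Lambda}), one coordinate of each $\lambda_i$ can be taken to be a single constant in $\F_{2^r}$ while the construction still requires the full $\tfrac{3m}{2}$ further coordinates; more carefully, the distance-$4$ code has redundancy $3$ but after the $\F_Q/\F_{2^r}$ expansion the first ``all-ones'' block contributes only $1$ symbol of $\F_{2^r}$ rather than $\tfrac{m}{2}$, giving $t/r = 1 + 2\cdot\tfrac{m}{2}\cdot\tfrac{?}{}$; (5) apply Proposition~\ref{Prop:OptimizeCore}.

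The main obstacle --- and the step I would spend the most care on --- is step~(4) combined with the bookkeeping in~(5): getting the exact value $t = r(\tfrac{3m}{2}+1)$ requires being precise about which coordinates of the parity-check columns of the Dumer--Zinov'ev code are ``free'' over $\F_{2^r}$ versus requiring the full sub-subfield expansion, in direct analogy with the ``$(1,\beta_i,\ldots,\beta_i^{h-1})$'' normalization and the ``omit $\beta_i^j$ when $2^r \mid j$'' optimization in the proof of Theorem~\ref{Th:OptimizedConstruction}. In other words, one cannot simply quote ``redundancy $3$ over $\F_Q$'' and multiply by $\tfrac{m}{2}$; one must use the \emph{structure} of the distance-$4$ code (one linear check plus two ``nonlinear'' Goppa/Kasami-type checks) so that the linear check costs only $1$ symbol of $\F_{2^r}$ after expansion. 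A secondary, purely arithmetic point to verify is that the Frobenius / power-sum manipulation that let Theorem~\ref{Th:OptimizedConstruction} pass from~(\ref{Eqn:LambdaDep}) to an honest Vandermonde contradiction still goes through for this particular code; since the code is itself defined over a field of characteristic two by algebraic (trace/power) conditions, this should be routine, but it is the place a hidden assumption about $m$ being even would bite if one were careless.
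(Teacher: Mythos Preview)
Your high-level strategy is exactly the paper's: keep $S_1$ equal to a basis of $\F_{2^r}$ together with $0$, replace the BCH-based $S_2$ by columns of a better distance-$4$ code, and invoke Proposition~\ref{Prop:OptimizeCore}. Where you diverge is in the construction of $S_2$, and this is where your confusion enters.

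The paper does not pass through an intermediate field $\F_Q$ with $Q=2^{rm/2}$ at all. It simply invokes~\cite[Theorem~5]{YD} directly with $q=2^r$ and $d=4$: that theorem already furnishes a linear code over $\F_{2^r}$ of length $2^{rm}\geq\ell$, minimum distance $4$, and redundancy exactly $\tfrac{3m}{2}+1$ (evenness of $m$ is used internally to that cited construction). Taking any $\ell$ columns of its parity-check matrix gives $S_2'\subseteq\F_{2^r}^{3m/2+1}$, which one then reads as a set $S_2\subseteq\F_{2^t}$ with $t=r(\tfrac{3m}{2}+1)$ immediately. There is no subfield expansion to perform, no Frobenius identity to re-derive, and no separate accounting for an ``all-ones'' block: the number $\tfrac{3m}{2}+1$ is simply the redundancy of the cited code over $\F_{2^r}$.

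Your attempt to manufacture $S_2$ from a redundancy-$3$ code over $\F_Q=\F_{2^{rm/2}}$ and then expand down to $\F_{2^r}$ does not produce the right count. Three $\F_Q$-symbols expand to $3\cdot\tfrac{m}{2}=\tfrac{3m}{2}$ symbols over $\F_{2^r}$, and your proposed fix (declaring one of the three checks to lie in $\F_{2^r}$) would instead yield $1+2\cdot\tfrac{m}{2}=m+1$; neither equals $\tfrac{3m}{2}+1$. The bookkeeping you flag as ``the main obstacle'' is really a symptom that the intermediate-field route is not how this construction is meant to go. The fix is to quote the code over $\F_{2^r}$ at its stated redundancy and be done --- which is precisely what the paper does in two sentences.
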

\begin{proof}
Let $\{\xi_1,\ldots,\xi_r\}$ be an arbitrary basis of $\mathbb{F}_{2^r}$ over $\mathbb{F}_2.$ We set $\xi_{r+1}=0$ and $S_1=\{\xi_1,\ldots,\xi_{r+1}\}.$ Clearly, $S_1$ is $(h+1)$-weakly independent over $\mathbb{F}_2$ for all $h.$ Let $S_2^\prime\subseteq \mathbb{F}_{2^r}^{\frac{3}{2}m+1}$ be an arbitrary collection of $\ell$ columns of the parity check matrix of the code $C^\prime$ from~\cite[Theorem 5]{YD}, where we set $q=2^r$ and $d=4.$ $S_2^\prime$ naturally defines a set $S_2\subseteq \mathbb{F}_{2^t}$ that is $3$-independent over $\mathbb{F}_{2^r}$.
\end{proof}
We remark that using results in~\cite{EB} one can get further small improvements upon the theorem above.

\begin{theorem}\label{Th:H4}
Let positive integers $k,r,h=4$ be such that $\ell =\frac{k+h}{r}$ is an integer. Let $m$ be the smallest integer such that $3\mid (m-1)$ and $\ell\leq 2^{r(m-1)};$ then there exists a maximally recoverable local $(k,r,4)$-code over the field $\mathbb{F}_{2^t}$ for $t=r(2m+\frac{m-1}{3}).$
\end{theorem}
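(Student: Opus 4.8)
Looking at this, I need to prove Theorem \ref{Th:H4}, which handles $h=4$ by replacing BCH codes with better codes (analogous to the $h=3$ case in Theorem \ref{Th:H3}).

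The plan follows exactly the template of Theorem \ref{Th:H3}'s proof, using Proposition \ref{Prop:OptimizeCore}. First I would set $\{\xi_1,\ldots,\xi_r\}$ to be an arbitrary basis of $\mathbb{F}_{2^r}$ over $\mathbb{F}_2$, put $\xi_{r+1}=0$, and let $S_1=\{\xi_1,\ldots,\xi_{r+1}\}$; this set is $(h+1)$-weakly independent over $\mathbb{F}_2$ for every $h$, in particular for $h=4$, since the only subsets summing to zero are the full $\mathbb{F}_2$-span relations among basis elements, which require all $r$ of them plus nothing else — and $r+1 > 5$ is not needed, the point is just that no nonempty subset of size $\leq 5$ sums to zero (for $r \geq 5$; for small $r$ one checks $\xi_{r+1}=0$ cannot be in a size-$\geq 2$ zero-sum set and basis elements are independent). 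Since $h=4$ is even, Proposition \ref{Prop:OptimizeCore} actually only requires $S_1$ to be $h=4$-weakly independent, so $(h+1)=5$-weak independence is more than enough.

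Next, the heart of the argument: I need an $\ell$-sized set $S_2 \subseteq \mathbb{F}_{2^t}$ that is $4$-independent (i.e., $h$-independent) over $\mathbb{F}_{2^r}$, where $t = r(2m + \frac{m-1}{3})$. Setting $q = 2^r$, I want $\ell$ columns of the parity-check matrix of a $q$-ary linear code of minimum distance $5$ (so that any $4$ columns are linearly independent over $\mathbb{F}_q$), with codimension $t/r = 2m + \frac{m-1}{3}$ over $\mathbb{F}_q$. The condition "$3 \mid (m-1)$" makes $\frac{m-1}{3}$ an integer, and "$\ell \leq 2^{r(m-1)} = q^{m-1}$" governs the length. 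This matches a known family of $q$-ary distance-$5$ codes (double-error-correcting codes, cf.~\cite{Dumer,YD}) whose redundancy is roughly $2m + m/3$: concretely, one takes columns of the form built from $\beta_i \in \mathbb{F}_{q^{m-1}}$ together with an extra few coordinates encoding a lower-dimensional piece that kills the remaining bad configurations, exactly as in the $d=4$ case of~\cite[Theorem 5]{YD} but pushed to $d=5$. So $S_2'$ is an arbitrary collection of $\ell$ such columns in $\mathbb{F}_{2^r}^{2m + (m-1)/3}$, and $S_2'$ naturally defines $S_2 \subseteq \mathbb{F}_{2^t}$ that is $4$-independent over $\mathbb{F}_{2^r}$.

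Finally, I apply Proposition \ref{Prop:OptimizeCore} with these $S_1, S_2$: the product set $S_1 \cdot S_2$ gives a maximally recoverable local $(k,r,4)$-code over $\mathbb{F}_{2^t}$ with $t = r(2m + \frac{m-1}{3})$, as claimed. The main obstacle — and the only nonroutine point — is citing or constructing the right $q$-ary distance-$5$ code with redundancy $2m + (m-1)/3$ and verifying its length is at least $\ell$ under the stated divisibility and size constraints; everything else is a verbatim repetition of the $h=3$ argument. I would need to check carefully that the parameters of the code from the relevant result in~\cite{Dumer} or~\cite{YD} (with $q = 2^r$, $d=5$) line up exactly with the redundancy $2m + \frac{m-1}{3}$ and the length bound $q^{m-1} \geq \ell$; this is the step where an off-by-one in the exponent or a misidentification of the cited code would break the theorem.
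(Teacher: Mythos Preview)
Your proposal is correct and follows essentially the same approach as the paper: set $S_1$ to be a basis of $\mathbb{F}_{2^r}$ together with $0$, take $S_2$ from the columns of the parity check matrix of a $q$-ary distance-$5$ code with $q=2^r$ (the paper cites specifically the code $U'$ from \cite[Theorem~6]{Dumer}), and apply Proposition~\ref{Prop:OptimizeCore}. Your worry about small $r$ is unnecessary: since $\{\xi_1,\ldots,\xi_r\}$ is a basis, no nonempty subset of $S_1$ sums to zero (adding $0$ to a subset does not change its sum), so $S_1$ is $t$-weakly independent for every $t$, regardless of $r$.
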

\begin{proof}
As before let $\{\xi_1,\ldots,\xi_r\}$ be an arbitrary basis of $\mathbb{F}_{2^r}$ over $\mathbb{F}_2.$ We set $\xi_{r+1}=0$ and $S_1=\{\xi_1,\ldots,\xi_{r+1}\}.$ Clearly, $S_1$ is $(h+1)$-weakly independent over $\mathbb{F}_2$ for all $h.$ Let $S_2^\prime\subseteq \mathbb{F}_{2^r}^{2m+\frac{m-1}{3}}$ be an arbitrary collection of $\ell$ columns of the parity check matrix of the code $U^\prime$ from~\cite[Theorem 6]{Dumer}, where we set $q=2^r.$ $S_2^\prime$ naturally defines a set $S_2\subseteq \mathbb{F}_{2^t}$ that is $4$-independent over $\mathbb{F}_{2^r}$.
\end{proof}

\section{Asymptotic parameters}\label{Sec:Asymptotics}

Unlike data transmission applications, in data storage applications one typically does not need to scale the number of heavy parities linearly with the number of data fragments $k$ to ensure the same level of reliability~\cite{HuangSX}, as the likelihood $p$ a fragment failure during a certain period of time is usually much smaller than $\frac{1}{k}.$ Much slower growth in the number of heavy parities suffices. Therefore we find the asymptotic setting of fixed $r,h$ and growing $k$ relevant for practice and analyze the growth rate of the alphabet size in different families of local MR $(k,r,h)$-codes in this regime.

It is not hard to see that in codes of Theorem~\ref{Th:BasicConstruction} the alphabet size grows as $O\left(k^h\right).$ In codes of Theorem~\ref{Th:OptimizedConstruction} the alphabet size grows as $O\left(k^{\lceil (h-1)\left(1-\frac{1}{2^r}\right)\rceil}\right).$ For small values of $h$ one can get some further improvements. MR local $(k,r,h=3)$-codes of Theorem~\ref{Th:H3} use alphabet of size $O\left(k^{\frac{3}{2}}\right).$ MR local $(k,r,h=4)$-codes of Theorem~\ref{Th:H4} use alphabet of size $O\left(k^{\frac{7}{3}}\right).$

Obtaining constructions with reduced alphabet size remains a major challenge. The only lower bound we currently have comes from results on the main conjecture for MDS codes and is $\Omega(k).$ In particular the asymptotic lower bound does not depend on $h.$
\begin{theorem}\label{Th:Lower}
Let $C$ be a maximally recoverable local $(k,r,h)$-code with $h\geq 2.$ Assume $C$ is defined over the finite field $\mathbb{F}_q;$ then $q\geq k+1.$
\end{theorem}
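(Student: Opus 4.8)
The plan is to reduce this to the main conjecture for MDS codes (the MDS conjecture), which is what the excerpt already anticipates. Recall that the MDS conjecture, in the range that has actually been \emph{proved} (namely for $k$ at most the characteristic, and more strongly the Ball--de Beule / Ball results over prime fields), gives an upper bound on the length of an MDS code of dimension $k$ over $\mathbb{F}_q$; but for the purposes of an $\Omega(k)$ lower bound we only need the following unconditional, elementary fact: a $[k+h,k]$ MDS code over $\mathbb{F}_q$ with $h\ge 2$ and $k\ge 2$ cannot exist unless $q\ge k$ — equivalently, restating in terms of the dual or in terms of arcs, there is no $[N,N-2]$ MDS code (i.e.\ no set of $N$ points in general position in $\mathrm{PG}(1,q)$, or no MDS code with two redundant symbols) of length $N> q+1$. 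Concretely: a $[k+2,k,3]$ MDS code has a $2\times(k+2)$ parity check matrix in which every two columns are linearly independent, i.e.\ the columns are $k+2$ pairwise non-proportional nonzero vectors in $\mathbb{F}_q^2$; there are only $q+1$ proportionality classes, so $k+2\le q+1$, giving $q\ge k+1$.

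So the key step is: extract from the maximally-recoverable local $(k,r,h)$-code $C$ a genuine $[k+h,k]$ MDS code over $\mathbb{F}_q$, and then (since $h\ge 2$) puncture that MDS code down to a $[k+2,k]$ MDS code. First I would invoke Definition~\ref{Def:MRLcode}: pick any vector ${\bf e}\in[r+1]^\ell$ — i.e.\ choose one coordinate from each of the $\ell=\frac{k+h}{r}$ local groups — and puncture $C$ at those $\ell$ coordinates; by maximal recoverability the result is a $[k+h,k]$ MDS code $C_0$ over $\mathbb{F}_q$. Next, since $h\ge 2$, I would puncture $C_0$ at any $h-2$ further coordinates. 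Puncturing an $[n',k']$ MDS code at a single coordinate yields an $[n'-1,k']$ MDS code provided $n'-1\ge k'$ (the minimum distance drops by exactly one and stays one above the Singleton bound; one should just check the dimension does not drop, which holds because $k'<n'$ so no coordinate is determined by the others in an MDS code of length $>k'$, hence the projection is still surjective onto $\mathbb{F}_q^{k'}$). Iterating $h-2$ times gives a $[k+2,k,3]$ MDS code over $\mathbb{F}_q$. Finally I would apply the elementary column-count argument above to conclude $k+2\le q+1$, i.e.\ $q\ge k+1$.

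The only mildly delicate point is the assertion that puncturing preserves the MDS property, so I would state it as a small lemma: if $D$ is an $[n',k']$ MDS code over any field and $n'>k'$, then puncturing $D$ at any coordinate gives an $[n'-1,k']$ MDS code. This follows because the distance of an MDS code is $n'-k'+1\ge 2$, so no coordinate of $D$ is identically zero and (more strongly) by MDS-ness every $k'$ coordinates carry full information, hence in particular the remaining $n'-1\ge k'$ coordinates still carry full information — so the punctured code has dimension exactly $k'$ — and its distance is $\ge n'-k'$, matching the Singleton bound for length $n'-1$. I expect the main "obstacle" here to be purely expository rather than mathematical: making sure the chain of reductions (maximal recoverability $\Rightarrow$ a $[k+h,k]$ MDS code $\Rightarrow$, by repeated puncturing, a $[k+2,k]$ MDS code $\Rightarrow$ the counting bound) is spelled out cleanly, and correctly flagging that we are only using the trivial, unconditional case $N\le q+1$ of the MDS conjecture (no hypotheses on the characteristic of $\mathbb{F}_q$ are needed), rather than any of its deep proven instances. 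One should also note the hypothesis $h\ge 2$ is exactly what makes the target dimension satisfy $k\le k+2-2$, i.e.\ guarantees the final code has length $\ge k$; for $h=1$ no such bound holds, consistent with the $h=1$ constructions mentioned in the introduction living over small fields.
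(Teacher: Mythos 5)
Your proof is correct and reaches the same bound, but by a genuinely more self-contained route than the paper. The paper obtains a $[k+h,k,h+1]$ MDS code $C'$ by deleting all local parities (which is one particular choice of $\mathbf{e}$ in Definition~\ref{Def:MRLcode}) and then immediately cites~\cite[Lemma~1.2]{MainMDS1} to conclude $k+h\le q+h-1$, i.e.\ $q\geq k+1$. You instead take an arbitrary $\mathbf{e}$, invoke maximal recoverability to get a $[k+h,k]$ MDS code, and then puncture down to a $[k+2,k,3]$ MDS code — with a correct auxiliary observation that puncturing an MDS code of length strictly exceeding its dimension preserves both the dimension and the MDS property — and finish by the elementary count of $q+1$ proportionality classes in $\mathbb{F}_q^2\setminus\{0\}$ applied to the $2\times(k+2)$ parity check matrix. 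The trade-off: the paper's citation is shorter and, as the authors note, the underlying results in~\cite{MainMDS1,MainMDS2} can yield small non-asymptotic improvements in certain $(q,h)$ regimes; your argument is elementary and reference-free, and makes transparent exactly which (trivial) instance of the MDS-length bound is being used. Your remark on the role of $h\geq 2$ — guaranteeing the punctured length is at least $k+2$ so the two-row counting argument applies — is a clean way to see why the hypothesis is needed, and is consistent with the paper's observation that $h=1$ codes exist over small fields.
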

\begin{proof}
Consider the code $C^\prime$ that is obtained from $C$ by deleting all local parities. Clearly, $C^\prime$ is a $[k+h,k,h+1]$ MDS code. Consider the $h\times (k+h)$ parity check matrix of the code $C^\prime$ with entries in $\mathbb{F}_q.$ By~\cite[Lemma 1.2]{MainMDS1}, $k+h\leq q+h-1.$
\end{proof}
Details regarding the recent progress on the main conjecture for MDS codes can be found in~\cite{MainMDS1,MainMDS2}. In particular, results there allow one to get small non-asymptotic improvements upon Theorem~\ref{Th:Lower}.

\subsection{Random codes}
One way to construct maximally-recoverable local codes is by picking coefficients in heavy parities at random from a large enough finite field. In order to compare our constructions in Section~\ref{Sec:Constructions} with random local codes in this section we show that random codes are not maximally recoverable (except with probability $o(1)$) unless the size of the finite field from which the coefficients are drawn exceeds $\Omega\left(k^{h-1}\right).$ The following theorem is due to Swastik Kopparty and Raghu Meka~\cite{MK}.

\begin{theorem}\label{Th:RandomField}
Let positive integers $k,r,h$ be such that $\ell =\frac{k+h}{r}$ is an integer. Consider a local $(k,r,h)$-code $C,$ where the coefficients in heavy parities are drawn at random uniformly and independently from a finite field $\mathbb{F}_q.$ Suppose $q\leq \left(\lfloor\frac{k}{2}\rfloor\atop h-1\right);$  then the probability that $C$ is maximally-recoverable is at most $\left(1-\frac{1}{2^he^{h-1}}\right)^{\frac{k}{2}}.$
\end{theorem}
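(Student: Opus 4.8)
The plan is to exhibit a single erasure pattern that the code must correct (by maximal recoverability), reduce the correctness of that pattern to the non-singularity of a random $(h-1)\times(h-1)$ matrix whose entries are drawn from $\mathbb{F}_q$, and then show that such a matrix is singular with constant probability when $q$ is small. Finally I would amplify this constant failure probability into the exponential bound $\left(1-\frac{1}{2^h e^{h-1}}\right)^{k/2}$ by finding $\Theta(k)$ \emph{independent} copies of this bad event inside disjoint blocks of local groups.

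First I would set up the reduction to a Vandermonde-type matrix. As in the proof of Proposition~\ref{Prop:Central}, fix a vector ${\bf e}$ picking one coordinate per local group; by maximal recoverability, puncturing in those positions must yield a $[k+h,k]$ MDS code, equivalently every $h$ columns of the reduced $h\times(k+h)$ parity check matrix are linearly independent over $\mathbb{F}_q$. Now pick a set $E$ of $h$ surviving coordinates, all lying in distinct local groups, and arrange that $h-1$ of them are \emph{data} coordinates whose heavy-parity coefficients were chosen at random (the first row of the parity check matrix is the all-ones local structure, so after eliminating punctured symbols the relevant $h\times h$ minor has first row all ones and the remaining $h-1$ rows given by the random coefficient vectors). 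Row-reducing against the all-ones row turns the determinant condition into the non-singularity of an $(h-1)\times(h-1)$ matrix whose columns are differences of random coefficient vectors over $\mathbb{F}_q$; I would then observe that a random $(h-1)\times(h-1)$ matrix over $\mathbb{F}_q$, or more precisely a random matrix of this difference form, is singular with probability at least some explicit constant $\delta>0$ whenever $q \le \binom{\lfloor k/2\rfloor}{h-1}$ — this is where the hypothesis on $q$ enters, presumably via a counting/pigeonhole argument that among $\lfloor k/2\rfloor$ random vectors in $\mathbb{F}_q^{h-1}$ two must coincide (collision bound), forcing a linear dependence among the corresponding columns. The cleanest route is: if $q \le \binom{\lfloor k/2\rfloor}{h-1}$ then $q^{h-1}$ is comparable to the number of $(h-1)$-subsets available, and a birthday-type estimate gives a collision among the $h-1$-tuples of coefficient vectors drawn for distinct local groups with probability bounded below by a constant like $\frac{1}{2^{h}e^{h-1}}$.

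Next, for the amplification, I would partition the $\ell$ local groups into $\approx k/2$ disjoint blocks, each block large enough (size roughly $h$ or a constant times $h$) to support its own independent instance of the bad event described above, using \emph{different} random coefficients each time since the heavy-parity coefficients for symbols in distinct local groups are drawn independently. Because the events live on disjoint coordinate sets, they are mutually independent; if any one of them occurs, the corresponding $h$-subset of columns is dependent and hence the code fails to be MDS after the appropriate puncturing, so it is not maximally recoverable. Therefore $\Pr[C\text{ is MR}] \le \prod (1-\delta) = (1-\delta)^{k/2}$ with $\delta = \frac{1}{2^h e^{h-1}}$, which is the claimed bound.

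The main obstacle I anticipate is pinning down the exact constant $\frac{1}{2^h e^{h-1}}$ and matching it to the precise threshold $q \le \binom{\lfloor k/2\rfloor}{h-1}$: one needs the collision/dependence event to have probability at least this constant \emph{uniformly}, which requires a careful second-moment or inclusion–exclusion estimate on the probability that $h-1$ uniformly random vectors in $\mathbb{F}_q^{h-1}$ (together with the structural all-ones row) fail to span, rather than a crude union bound. A subtler point is ensuring the $h-1$ randomly-chosen coordinates in each block really do correspond to \emph{free} random coefficients — i.e. that they are data symbols, not heavy parities or local parities — so that the block size must be chosen large enough relative to $r$ and $h$; this only costs constant factors and explains the $k/2$ rather than $k$ in the exponent, but it needs to be stated carefully. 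Everything else (the reduction via Proposition~\ref{Prop:Central}, the determinant manipulation, the independence across blocks) is routine.
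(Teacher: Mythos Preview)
Your proposal has a real structural gap in the amplification step. You want $k/2$ \emph{independent} bad events coming from \emph{disjoint} blocks of coordinates, but each bad event is a linear dependence among $h$ columns, so disjoint blocks must contain at least $h$ columns each; that yields at most $k/h$ blocks, hence exponent $O(k/h)$, not $k/2$. The paper avoids this by a different independence structure: delete all local parities to get $C'$, whose $h\times(k+h)$ parity check matrix consists of $k$ i.i.d.\ uniform columns ${\bf m}_1,\dots,{\bf m}_k\in\F_q^h$ together with an $h\times h$ identity block. Choose the largest $t$ with $\binom{t}{h-1}\le q$ (so $t\le\lfloor k/2\rfloor$ by hypothesis), and let $U\subseteq\F_q^h$ be the union of all spans $\mathcal L(S)$ over $(h-1)$-subsets $S\subseteq[t]$. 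Conditioned on ${\bf m}_1,\dots,{\bf m}_t$, the events ``${\bf m}_i\notin U$'' for $i=t+1,\dots,k$ are mutually independent (they involve fresh independent columns), and there are $k-t\ge k/2$ of them. An inclusion--exclusion lower bound on $|U|$ together with the ratio estimate $\binom{t+1}{h-1}/\binom{t}{h-1}\le(2e)^{h-1}$ gives $\Pr[{\bf m}_i\notin U]\le 1-\tfrac{1}{2^h e^{h-1}}$, which produces the claimed exponent.

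Two smaller points also need fixing. First, your ``all-ones first row'' is not there: after removing the local parities (which is the necessary condition actually used), the heavy-parity check matrix has fully random columns over $\F_q^h$, with no structural row; the reduction to an $(h-1)\times(h-1)$ determinant via Proposition~\ref{Prop:Central} applies to the Frobenius-structured constructions of Section~\ref{Sec:Constructions}, not to the random model. Second, your birthday/collision heuristic (two of $\lfloor k/2\rfloor$ random vectors in $\F_q^{h-1}$ coincide) targets the wrong event and gives a probability of order $k^2/q^{h-1}$, which is negligibly small under the hypothesis $q\le\binom{\lfloor k/2\rfloor}{h-1}$; the event you actually need is that a fresh column lands in one of the $\binom{t}{h-1}$ many $(h-1)$-dimensional spans, whose union has density $\approx\binom{t}{h-1}/q$ in $\F_q^h$. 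That is where both the threshold on $q$ and the constant $\tfrac{1}{2^he^{h-1}}$ come from.
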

\begin{proof} Let $t\leq \lfloor\frac{k}{2}\rfloor$ be the largest integer such that $\left(t\atop h-1\right)\leq q.$ Note that for all positive integers $x$ we have
\begin{equation}\label{Eqn:BiGrow}
\left(x+1\atop h-1\right) / \left(x\atop h-1\right) \leq \left(\frac{e(x+1)}{h-1}\right)^{h-1} / \left(\frac{x}{h-1}\right)^{h-1} \leq (2e)^{h-1}.
\end{equation}
Let $\epsilon=\frac{1}{(2e)^{h-1}}.$ By~(\ref{Eqn:BiGrow}) and the definition of $t$ we have
\begin{equation}\label{Eqn:t}
\epsilon q \leq \left(t\atop h-1\right)\leq q.
\end{equation}
Consider the $[k+h,k]$ code $C^\prime$ that is obtained from $C$ by deleting all $\ell$ local parities. Let $M$ be the $h\times (k+h)$ parity check matrix  of $C^\prime.$ Columns of $M$ that correspond to heavy parities form the $h\times h$ identity matrix. Other $k$ columns ${\bf m}_1,\ldots,{\bf m}_k$ are drawn from $\mathbb{F}_q^h$ uniformly at random. In what follows for $S\subseteq [k]$ we denote the span of vectors $\{{\bf m}_i\}_{i\in S}$ by $\mathcal L(S).$ The code $C$ is maximally recoverable only if $C^\prime$ is MDS. The code $C^\prime$ is MDS only if any $h$ vectors in $\{{\bf m}_1,\ldots,{\bf m}_t\}$ are linearly independent and for all $S\subseteq [t], |S|=h-1$ and $i\in [k]\setminus [t],$ ${\bf m}_i\not\in \mathcal{L}(S).$ In what follows we assume that any $h$ vectors in $\{{\bf m}_i\}_{i\in [t]}$ are indeed independent. Let $U\subseteq \mathbb{F}_q^h$ denote the union of $\mathcal{L}(S)$ over all $S\subseteq [t], |S|=h-1.$ By inclusion-exclusion we have
\begin{equation}\label{Eqn:CupL}
|U| \geq \left(t\atop h-1\right) q^{h-1} - \left(\left(t\atop h-1\right) \atop 2\right)q^{h-2}\geq
\left(t\atop h-1\right) q^{h-1}\left(1-\frac{\left(t\atop h-1\right)}{2q}\right).
\end{equation}
By the discussion above
\begin{align*}\label{Eqn:Pr1}
\mathrm{Pr}\left[C^\prime\ \mathrm{is\ MDS}\ \right] & \leq \prod_{i=t+1}^k\mathrm{Pr}[{\bf m}_i\not\in U] \\
                                                     & =    \left(\frac{q^h - |U|}{q^h}\right)^{k-t}       \\
                                                     & \leq \left(1- \frac{\left(t\atop h-1\right)}{q}\left(1-\frac{\left(t\atop h-1\right)}{2q}\right)\right)^{k-t} \\
                                                     & \leq \left(1- \frac{\left(t\atop h-1\right)}{2q}\right)^{k-t},
\end{align*}
where the last bound follows by using the RHS of~(\ref{Eqn:t}) inside the inner brackets. Finally, using the LHS of~(\ref{Eqn:t}) in the formula above we obtain
$$
\mathrm{Pr}\left[C\ \mathrm{is\ MR}\ \right] \leq \left(1-\frac{\epsilon}{2}\right)^{k-t}\leq \left(1-\frac{1}{2^he^{h-1}}\right)^{\frac{k}{2}}.
$$
This concludes the proof.
\end{proof}

One way to interpret Theorem~\ref{Th:RandomField} is as saying that random codes cannot offer an asymptotic improvement upon the construction of Theorem~\ref{Th:OptimizedConstruction}.

\section{Open questions}\label{Sec:OpenQ}

We studied the trade-off between maximal recoverability and alphabet size in local codes. Most questions in this area remain open. The main challenge is to reduce the field in constructions of Theorems~\ref{Th:BasicConstruction} and~\ref{Th:OptimizedConstruction} or to prove that such a reduction is not possible.

\begin{enumerate}
\item We are particularly interested in the asymptotic setting of constant $r$ and $h$ and growing $k.$ In this setting can one get local MR codes over a field of size $O(k)$ or local MR codes inherently require a larger field than then their MDS counterparts?

\item In the setting of $h=O(1),$ $r=\Theta(k),$ and growing $k,$ can one get a lower bound of $\omega(k)$ for the field size of local MR codes?

\item While data-local and local codes present two important practically motivated code topologies, constructing MR codes over other topologies is also of interest. Below we sketch the general definitions of code topology and maximal recoverability.

    Assume there are two kinds of characters $\{x_i\}_{i\in [k]}$ and $\{\alpha_j\}_{j\in [t]}.$ Characters $\{x_i\}$ represent data symbols and characters $\{\alpha_j\}$ represent free coefficients. An $[n,k]$ systematic code topology is a collection of $n$ expressions $\{E_\ell\}_{\ell\in [n]}$ in $\{x_i\}$ and $\{\alpha_j\}.$ Such a collection includes all individual characters $x_1,\ldots,x_k.$ Every other expression has the form
    $$
    E_s=\sum_i L_{i,s}(\alpha_1,\ldots,\alpha_t)x_i,
    $$
    where $L_{i,s}$'s are arbitrary linear functions of $\{\alpha_j\}$ over a field $\mathbb{F}^\prime.$ Specifying code topology allows one to formally capture locality constraints that one wants to impose on the code. Fixing values of coefficients $\{\alpha_j\}$ in a field $\mathbb{F}$ extending $\mathbb{F}^\prime$ turns a code topology into a systematic $[n,k]$ code over~$\mathbb{F}.$

    We say that a sub-collection $S$ of expressions implies an expression $E_i$ if $E_i$ can obtained as a linear combination of expressions in $S,$ where the coefficients are rational functions in $\{\alpha_j\}.$ We say that an instantiation of a topology is maximally recoverable if every implication as above still holds after we instantiate $\{\alpha_j\}$'s. In other words, instantiating the corresponding rational functions does not cause a division by zero.
\end{enumerate}

\section*{Acknowledgements}
We would like to thank Swastik Kopparty and Raghu Meka for allowing us to include their Theorem~\ref{Th:RandomField} in the current paper.

\bibliographystyle{plain}
\bibliography{code-locality}

\end{document}